\author{Dominik K\"{o}ppl}{Department of Informatics, Kyushu University \and Japan Society for Promotion of Science (JSPS), Japan}{dominik.koeppl@inf.kyushu-u.ac.jp}{https://orcid.org/0000-0002-8721-4444}{JSPS KAKENHI Grant Number JP18F18120.}
\author{Daiki Hashimoto}{Graduate School of Information Sciences, Tohoku University, Japan}{daiki\_hashimoto@shino.ecei.tohoku.ac.jp}{}{}
\author{Diptarama Hendrian}{Graduate School of Information Sciences, Tohoku University, Japan}{diptarama@tohoku.ac.jp}{https://orcid.org/0000-0002-8168-7312}{JSPS KAKENHI Grant Number JP19K20208.}
\author{Ayumi Shinohara}{Graduate School of Information Sciences, Tohoku University, Japan}{ayumis@tohoku.ac.jp}{https://orcid.org/0000-0002-4978-8316}{JSPS KAKENHI Grant Number JP15H05706.}
\authorrunning{D. K\"{o}ppl, D. Hashimoto, D. Hendrian, and A. Shinohara} 
\keywords{In-Place Algorithms,
Burrows-Wheeler transform,
Lyndon words}
\newcommand*{\IroHako}[1]{\resizebox{1em}{!}{\begin{tikzpicture}[inner sep=0pt, outer sep=0pt, baseline]
		\draw [draw=none] (0,0) rectangle (1,1);
	\draw[line width=0.5em, #1] (0,0.5) -- (1,0.5);
\end{tikzpicture}
}
\hspace{-0.7em}
}
	\Crefname{algocf}{Algorithm}{Algorithms}
	\crefname{algocf}{Algo.}{Algos.}
	\Crefname{equation}{Equation}{Equations}
	\crefname{equation}{Eq.}{Eqs.}
    \Crefname{figure}{Figure}{Figures}\crefname{figure}{Fig.}{Figs.}\Crefname{theorem}{Theorem}{Theorems}
	\crefname{theorem}{Thm.}{Thms.}
	\Crefname{definition}{Definition}{Definitions}
	\crefname{definition}{Def.}{Defs.}
	\Crefname{corollary}{Corollary}{Corollaries}
	\crefname{corollary}{Cor.}{Cors.}
	\Crefname{section}{Section}{Sections}
	\crefname{section}{Sect.}{Sects.}
	\definecolor{solarizedYellow}{HTML}{B58900}
	\definecolor{solarizedOrange}{HTML}{CB4B16}
	\definecolor{solarizedRed}{HTML}{DC322F}
	\definecolor{solarizedMagenta}{HTML}{D33682}
	\definecolor{solarizedViolet}{HTML}{6C71C4}
	\definecolor{solarizedBlue}{HTML}{268BD2}
	\definecolor{solarizedCyan}{HTML}{2AA198}
	\definecolor{solarizedGreen}{HTML}{859900}
\DeclareMathAlphabet{\mathup}{OT1}{\familydefault}{m}{n}
\newcommand{\UnaryOperator}[2][]{\ifx&#1&\ensuremath{\mathop{}\mathopen{}#2\mathopen{}}\else \ensuremath{\mathop{}\mathopen{}#2\mathopen{}(#1)}\fi }
\newcommand{\Oh}[1]{\UnaryOperator[#1]{\mathcal{O}}}
\newcommand{\oh}[1]{\UnaryOperator[#1]{o}}
\newcommand{\Om}[1]{\UnaryOperator[#1]{\mathup{\Omega}}}
\newcommand*{\twodots}{\mathbin{{.}\,{.}}}
\newcommand*{\instancename}[1]{\ensuremath{\mathsf{#1}}} \newcommand*{\functionname}[1]{{\ensuremath{\renewcommand{\rmdefault}{ptm}\fontfamily{ppl}\selectfont\textrm{\textup{#1}}}}} 
\definecolor{teigiColor}{HTML}{5700B5}
\newcommand*{\teigi}[1]{\emph{\color{teigiColor}#1}}
\newcommand*{\RLBBWT}     {\instancename{RLBBWT}}
\newcommand*{\RLBWT}     {\instancename{RLBWT}}
\newcommand*{\BBWT}     {\instancename{BBWT}}
\newcommand*{\BWT}     {\instancename{BWT}}
\newcommand*{\BWTC}     {\ensuremath{\instancename{BWT}^\circ}}
\newcommand*{\SA}     {\instancename{SA}}
\newcommand*{\OmegaOrder}{\ensuremath{\prec_\omega}}
\newcommand*{\First}     {\instancename{F}}
\newcommand*{\LF}     {\instancename{LF}}
\newcommand*{\FL}     {\instancename{FL}}
\let\citet\cite
\newcommand*{\conj}[2][]{\ensuremath{\functionname{conj}_{#1}(#2)}}
\newcommand*{\runText}{\ensuremath{r_{T}}}
\newcommand*{\runBWT}{\ensuremath{r_{\BWT}}}
\newcommand*{\runBBWT}{\ensuremath{r_{\BBWT}}}
\newcommand*{\LexOrder}{\ensuremath{\prec_{\textup{lex}}}}
\newcommand*{\LexOrderEq}{\ensuremath{\preceq_{\textup{lex}}}}
\newcommand*{\LexOrderReq}{\ensuremath{\succeq_{\textup{lex}}}}
\newcommand*{\RunningExample}{bacabbabb}
\newcommand*{\ibeg}[1]{\ensuremath{\functionname{b}(#1)}}
\newcommand*{\iend}[1]{\ensuremath{\functionname{e}(#1)}}
\newcommand*{\rank}{\functionname{rank}}
\newcommand*{\select}{\functionname{select}}
\newcommand*{\ImgWidth}{\linewidth}
\title{In-Place Bijective Burrows-Wheeler Transforms}
\begin{document}
	
	\maketitle
	
\begin{abstract}
	One of the most well-known variants of the Burrows-Wheeler transform (BWT) [Burrows and Wheeler, 1994] is the bijective BWT (BBWT) [Gil and Scott, arXiv 2012], 
	which applies the extended BWT (EBWT) [Mantaci et al., TCS 2007] to the multiset of Lyndon factors of a given text.
	Since the EBWT is invertible,
	the BBWT is a bijective transform in the sense that the inverse image of the EBWT restores this multiset of Lyndon factors such that the original text can be obtained by sorting these factors in non-increasing order.

In this paper, we present algorithms constructing or inverting the BBWT in-place using quadratic time.
We also present conversions from the BBWT to the BWT, or vice versa, either (a) in-place using quadratic time, 
or (b) in the run-length compressed setting using \Oh{n \lg r / \lg \lg r} time with \Oh{r \lg n} bits of words, 
where $r$ is the sum of character runs in the BWT and the BBWT.
\end{abstract}

\clearpage

\section{Introduction}
The \emph{Burrows-Wheeler transform} (BWT)~\cite{burrows94bwt} is one of the most favored options
both for (a)~compressing and (b)~indexing data sets.
On the one hand, compression programs like \texttt{bzip2} apply the BWT to achieve high compression rates.
For that, they leverage the effect that the BWT built on repetitive data tends to have long character runs, 
which can be compressed by run-length compression, i.e., representing a substring of $\ell$ \texttt{a}'s by the tuple $(\texttt{a},\ell)$.
On the other hand, self-indexing data structures like the FM-index~\cite{ferragina00fmindex} enhance the BWT to a full-text self-index.
A combined approach of both compression and indexing is the run-length compressed FM-index~\cite{makinen05rle}, 
representing a BWT with $\runBWT$ character runs, i.e., maximal repetitions of a character, run-length compressed in $\Oh{\runBWT \lg n}$ bits.
This representation can be computed directly in run-length compressed space thanks to {Policriti and Prezza}~\citet{policriti18lz77}.
The BWT and its run-length compressed representation have been intensively studied during the past decades (e.g., \cite{ferragina04bwt,adjeroh08bwt,gagie19bwtdagstuhl} and the references therein).
Contrary to that, a variant, called the \emph{bijective} BWT (BBWT)~\cite{gil12bbwt}, is far from being well-studied despite its mathematically appealing characteristics\footnote{The BBWT is a bijection between strings without the need of an artificial delimiter needed, e.g., to invert the BWT.}.
As a matter of fact, we are only aware of one index data structure based on the BBWT~\cite{bannai19bbwt} and of two non-trivial construction algorithms~\cite{bonomo14sorting,bannai19bbwtconstarxiv} of the (uncompressed) BBWT, both with the need of additional data structures.

In this article, we shed more light on the connection between the BWT and the BBWT by quadratic time in-place conversion algorithms in \cref{secInPlaceConversions} constructing the BWT from the BBWT, or vice versa.
We can also perform these conversions in the run-length compressed setting 
in \Oh{n \lg r /\lg \lg r} time with space linear to the number of the character runs (cf.~\cref{secRunLengthCompressed} and \cref{thmRunLength}), 
where $r$ is the sum of character runs in the BWT and the BBWT.

\section{Related Work}
Given a text~$T$ of length~$n$,
the BWT of~$T$ is the string obtained by assigning $\BWT[i]$ to the character preceding the $i$-th lexicographically smallest suffix of~$T$ (or the last character of~$T$ if this suffix is the text itself).
By this definition, we can construct the BWT with any suffix array~\cite{manber93sa} construction algorithm.
However, storing the suffix array inherently needs $n \lg n$ bits of space.
{Crochemore et al.}~\citet{crochemore15bwt} tackled this space problem with an in-place algorithm constructing the BWT in \Oh{n^2} online on the reversed text
by simulating queries on a dynamic wavelet tree~\cite{grossi03wavelet} that would be built on the (growing) BWT\@.
They also gave an algorithm for restoring the text in-place in \Oh{n^{2+\epsilon}} time.

In the run-length compressed setting,
{Policriti and Prezza}~\citet{policriti18lz77} can compute the run-length compressed BWT having $\runBWT$ character runs
in \Oh{n \lg \runBWT} time while using \Oh{\runBWT \lg n} bits of space.
They additionally presented an adaption of the wavelet tree on run-length compressed texts, 
yielding a representation using \Oh{\runBWT \lg n} bits of space with \Oh{\lg \runBWT} query and update time.
Finally, practical improvements of the run-length compressed BWT construction were considered by {Ohno et al.}~\citet{ohno17rlbwt}.

The BBWT is the string obtained by assigning $\BBWT[i]$ to the last character of the $i$-th smallest string 
in the list of all conjugates of the factors of the Lyndon factorization sorted with respect to the $\OmegaOrder$ order~\cite[Def.~4]{mantaci07ebwt}.
{Bannai et al.}~\citet{bannai19bbwtconstarxiv} recently revealed a connection between the bijective BWT and suffix sorting by
presenting an \Oh{n} time BBWT construction algorithm based on SAIS~\cite{nong11sais}.
With dynamic data structures like a dynamic wavelet tree~\cite{navarro14dynamic}, 
{Bonomo et al.}~\citet{bonomo14sorting} could devise an algorithm computing the BBWT in \Oh{n \lg n / \lg \lg n} time.
With nearly the same techniques, Mantaci et al.~\cite{mantaci14sa} presented an algorithm computing the BWT (and simultaneously the suffix array if needed)
from the Lyndon factorization. 
All these construction algorithms need however data structures taking \Oh{n \lg n} bits of space.
However, the latter two (i.e.,~\cite{bonomo14sorting} and~\cite{mantaci14sa}) can work in-place by simulating the \LF{} mapping (cf.~\cref{secSearchSteps,secRewinding}), 
which we focus on in \cref{secConstructBWTC}.

\section{Preliminaries}
Our computational model is the word RAM model with word size \Om{\lg n}.
Accessing a word costs \Oh{1} time.
An algorithm is called \teigi{in-place} if it uses, besides a rewriteable input, only \Oh{\lg n} bits of working space.
We write $[\ibeg{I}\twodots{}\iend{I}] = I$ for an interval~$I$ of natural numbers.

\subsection{Strings}
Let $\Sigma$ denote an integer alphabet of size~$\sigma$ with $\sigma = n^{\Oh{1}}$.
We call an element $T \in \Sigma^*$ a \teigi{string}.
Its length is denoted by $|T|$.
Given an integer~$j \in [1\twodots{}|T|]$, we access the $j$-th character of~$T$ with~$T[j]$.
Concatenating a string $T \in \Sigma^*$ $k$ times is abbreviated by $T^k$.
A string~$T$ is called \teigi{primitive} if there is no string~$S \in \Sigma^+$ with $T = S^k$ for an integer~$k$ with $k \ge 2$.

When $T$ is represented by the concatenation of $X, Y, Z \in \Sigma^*$, i.e., $T = \textit{XYZ}$,
then $X$, $Y$ and $Z$ are called a \teigi{prefix}, \teigi{substring} and \teigi{suffix} of $T$, respectively;
the prefix $X$, substring~$Y$, or suffix~$Z$ is called \teigi{proper} if $X \neq T$, $Y \neq T$, or $Z \neq T$, respectively.
For two integers~$i,j$ with $1 \leq i \leq j \leq |T|$, let $T[i\twodots{}j]$ denote
the substring of $T$ that begins at position~$i$ and ends at position~$j$ in~$T$.
If $i > j$, then $T[i\twodots{}j]$ is the empty string.
In particular, the suffix starting at position~$j$ of~$T$ is called the \teigi{$j$-th suffix} of~$T$, and denoted with $T[j\twodots{}]$.
An occurrence of a substring~$S$ in $T$ is treated as a sub-interval of $[1\twodots{}|T|]$ such that $S = T[\ibeg{S}\twodots{}\iend{S}]$. 
The \teigi{longest common prefix (LCP)} of two strings~$S$ and~$T$ is the longest string that is a prefix of both $S$ and~$T$.

\subparagraph{Orders on Strings.}
We denote the \teigi{lexicographic order} with $\LexOrder{}$.
Given two strings~$S$ and~$T$, $S \LexOrder T$ if $S$ is a prefix of $T$ or there exists an integer~$\ell$ with
$1 \le \ell \le \min(|S|,|T|)$ such that $S[1\twodots{}\ell-1]=T[1\twodots{}\ell-1]$ and $S[\ell] < T[\ell]$.
Next we define the \teigi{$\OmegaOrder{}$ order} of strings, which is based on the lexicographic order of infinite strings:
We write $S \OmegaOrder{} T$ if the infinite concatenation $S^\omega := {SSS} \cdots$ is lexicographically smaller than 
$T^\omega := {TTT}\cdots$. 
For instance, $\texttt{ab} \LexOrder \texttt{aba}$ but $\texttt{aba} \OmegaOrder \texttt{ab}$.

\subparagraph{Rank and Select Queries.}
Given a string~$T \in \Sigma^*$, 
a character $\texttt{c} \in \Sigma$, and an integer~$j$, 
the \teigi{rank} query $T.\rank_{\texttt{c}}(j)$ counts the occurrences of \texttt{c} in $T[1\twodots{}j]$, and 
the \teigi{select} query $T.\select_{\texttt{c}}(j)$ gives the position of the $j$-th \texttt{c} in $T$.
We stipulate that $\rank_{\texttt{c}}(0) = \select_{\texttt{c}}(0) = 0$.
A \teigi{wavelet tree} is a data structure supporting rank and select queries.

\subsection{Lyndon Words}
Given a string~$T = T[1\twodots{}n]$, its $i$-th \teigi{conjugate} \conj[i]{T} is defined as $T[i+1\twodots{}n]T[1\twodots{}i]$ for an integer $i \in [0\twodots{}n-1]$.
We say that $T$ and all of its conjugates belong to the \teigi{conjugate class} $\conj{T} := \{ \conj[0]{T}, \ldots, \conj[n-1]{T} \}$.
If a conjugate class contains \emph{exactly} one conjugate that is lexicographically smaller than all other conjugates,
then this conjugate is called a \teigi{Lyndon word}~\cite{lyndon54}.
Equivalently, a string $T$ is said to be a Lyndon word if and only if $T \LexOrder S$ for every proper suffix $S$ of $T$~\cite[Prop.~1.2]{duval83lyndon}.

The \teigi{Lyndon factorization}~\cite{chen58lyndon} of $T\in\Sigma^+$ 
is the factorization of $T$ into a sequence of lexicographically non-increasing Lyndon words~$T_1 \cdots T_t$,
where (a) each $T_x\in\Sigma^+$ is a Lyndon word for $x \in [1\twodots{}t]$, and (b) $T_x \LexOrderReq T_{x+1}$ for each $x \in [1\twodots{}t)$.
Each Lyndon word $T_x$ is called a \teigi{Lyndon factor}.

\begin{lemma}[{\cite[Algo.\ 2.1]{duval83lyndon}}]\label{lemLyndonFactorizationLinearTime}
	Given a string~$T$ of length~$n$,
	there is an algorithm that outputs  
the Lyndon factors~$T_1, \ldots, T_t$ one by one in increasing order in \Oh{n} total time
	while keeping only a constant number of pointers to positions in~$T$ that (a) can move one position forward at one time or (b) can be set to the position of another pointer. 
\end{lemma}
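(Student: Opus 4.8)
The plan is to recall Duval's classical algorithm and re-examine its pointer usage under the restricted pointer model claimed in the statement. Duval's algorithm maintains three positions in $T$: the start position $i$ of the Lyndon factor currently being computed, a position $j$ that scans forward, and an auxiliary position $k$ that tracks the beginning of the current repeating block. The algorithm proceeds in rounds: for each round it compares $T[k]$ with $T[j]$, and (a) if $T[k] = T[j]$ it advances both $j$ and $k$ by one, (b) if $T[k] < T[j]$ it resets $k \leftarrow i$ and advances $j$ by one, and (c) if $T[k] > T[j]$ the algorithm has detected the end of one or more Lyndon factors: it repeatedly outputs the factor $T[i\twodots{}i+(j-k)-1]$ and advances $i$ by $j-k$ until $i > k$, then restarts with $j \leftarrow i$ and $k \leftarrow i$ (or the appropriate continuation). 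First I would present this pseudocode carefully and verify that every pointer update is either ``advance by one'' or ``set to the value of another pointer'' — in particular, $k \leftarrow i$, $j \leftarrow i$, $i \leftarrow i + (j-k)$ must all be expressible in the allowed operations.

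The subtlety, and the step I expect to be the main obstacle, is the update $i \leftarrow i + (j-k)$, which advances $i$ by the period length of the block just processed: this is not a single ``advance by one'' step, nor is it ``set $i$ to another pointer.'' To handle it within the model, I would observe that this advance can be realized by a loop that repeatedly increments $i$ while simultaneously incrementing a fourth pointer $k'$ initialized to the old $k$ position, stopping when $k'$ reaches $j$; equivalently, one notes that outputting the Lyndon factor $T[i\twodots{}i+(j-k)-1]$ already requires a pass that increments a pointer from $i$ through $i+(j-k)-1$, so the factor-output loop and the $i$-advance loop can be fused, keeping $i$ itself as the scanning pointer and terminating when $i$ has advanced $j-k$ positions — which is detected by running a parallel pointer from the old $k$ to $j$. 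I would need to check that when the factor length $j-k$ does not divide the block length $j-i$ (the final, possibly incomplete, repetition), the bookkeeping of "how many full copies to emit" is still doable with $\Oh{1}$ pointers; this follows because we simply emit copies of $T[i\twodots{}i+(j-k)-1]$ one at a time, advancing $i$ by the factor length each time, until $i > k$, and each such advance is again a synchronized double-pointer walk.

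Next I would argue the total running time is $\Oh{n}$: the standard amortized analysis shows that in the main scanning phase $j$ only moves forward and never exceeds $n+1$, while $k$ is reset to $i$ only when $j$ advances, and $i$ is monotonically non-decreasing; the factor-output phases collectively advance $i$ from $1$ to $n$, and the synchronized counter pointer in each output phase walks a total distance bounded by the sum of the block lengths, which is $\Oh{n}$. Combining the scanning cost $\Oh{n}$ with the output cost $\Oh{n}$ gives the claimed linear bound. Finally I would note that the factors are produced in left-to-right order, which is exactly increasing order of their starting positions, and that the correctness of the emitted factorization is precisely the content of Duval's theorem, which I invoke rather than reprove. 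The only genuinely new content of the lemma over the textbook statement is the pointer-model restriction, so the proof is essentially an audit of Duval's algorithm plus the loop-fusion trick for the $i \leftarrow i + (j-k)$ update.
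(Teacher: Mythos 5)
Your proposal is correct and follows essentially the same route as the paper: an audit of Duval's algorithm, with the factors emitted left to right and the classical amortized \Oh{n} bound invoked rather than reproved. You are in fact slightly more explicit than the paper's own proof, which silently treats the additive jump of the pointer by the factor length as a sequence of unit advances (counting the at most $2n$ visited positions) instead of spelling out the synchronized two-pointer walk you describe.
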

\begin{proof}
	The algorithm of Duval uses three variables $i$, $j$, and $k$ (cf.~\cref{algoDuval} in the appendix) pointing to text positions.
	$k$ is the ending position of the previously computed Lyndon factor (or zero at the beginning).
	On each step, $j \in [k+2\twodots{}n]$ is incremented by one, while $i$ is either incremented by one or reset to $k+1$, 
	as long as $T[k+1\twodots{}i-1] = T[j-(i-k)\twodots{}j-1]$ is a prefix of a Lyndon word starting at $T[k+1\twodots{}]$.
	If $T[k+1\twodots{}i]$ is no longer such a prefix, 
	then $T[k+1\twodots{}i-1]$ is either a Lyndon factor or a repetition of Lyndon factors, each of length $j-i$.
	In total, we visit at most $2n$ characters by incrementing the text positions~$i$, $j$, and~$k$.
\end{proof}

For what follows, we fix a string $T[1\twodots{}n]$ over an alphabet $\Sigma$ with size~$\sigma$.
We use the string $T := \texttt{\RunningExample}$ as our running example.
Its Lyndon factors are $T_1 = \texttt{b}$, $T_2 = \texttt{ac}$, $T_3 = \texttt{abb}$, and $T_4 = \texttt{abb}$.

\begin{figure}
	\centering{\includegraphics[width=\linewidth]{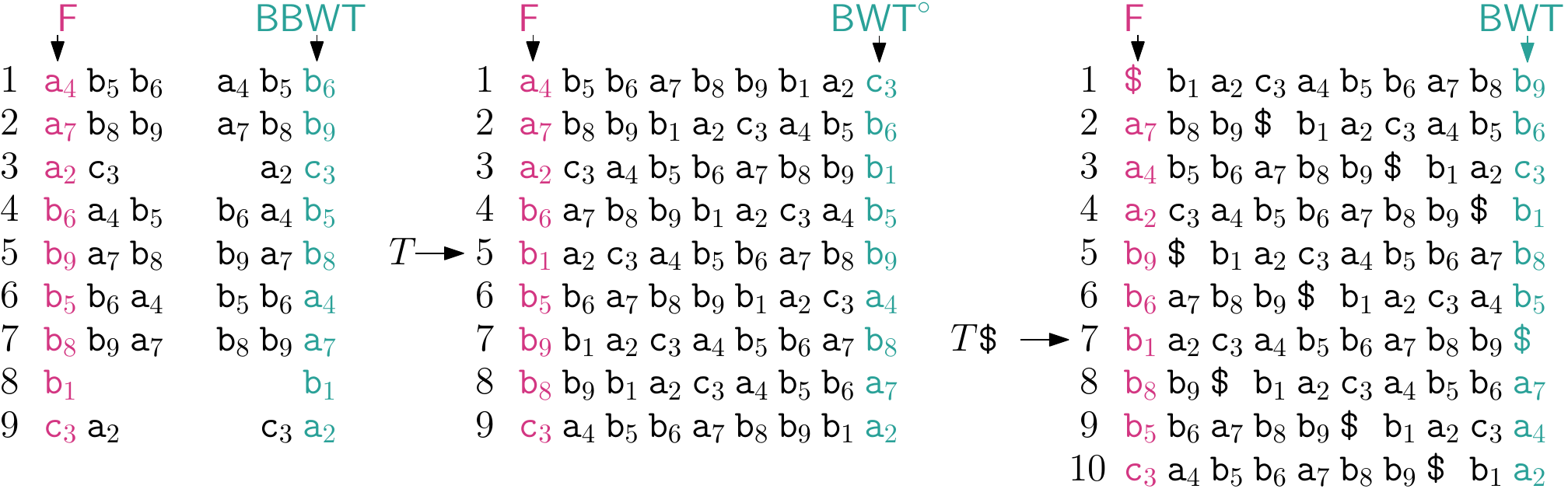}
	}\caption{All three BWT variants studied in this paper applied on our running example $T = \texttt{\RunningExample}$.
	\emph{Left}: \BBWT{} built on the last characters of the conjugates of all Lyndon words sorted in the \OmegaOrder{} order.
	\emph{Middle} and \emph{Right}: \BWTC{} and \BWT{} built on the lexicographically sorted conjugates of~$T$ and of~$T\texttt{\$}$, respectively.
	To ease understanding, each character is marked with its position in~$T$ in subscript.
	Reading these positions in $\First$ of \BBWT{} and in \First{} of \BWT{} gives a circular suffix array (there are multiple possibilities with $T_3 = T_4 = \texttt{abb}$) and the suffix array (the position of \texttt{\$} is uniquely defined as $|T\texttt{\$}| = 10$).
}
	\label{figBWTs}
\end{figure}

\subsection{Burrows-Wheeler Transforms}\label{secBWT}
We denote the bijective BWT of~$T$ by \BBWT{}, where $\BBWT[i]$ is the last character of the $i$-th string in the list 
storing the conjugates of all Lyndon factors $T_1,\ldots,T_t$ of~$T$ sorted with respect to the \OmegaOrder{} order. 
A property of \BBWT{} used in this paper as a starting point for an inversion algorithm is the following:

\begin{lemma}[{\cite[Lemma~15]{bonomo14sorting}}]\label{lemBBWTFirstChar}
	$\BBWT[1] = T[n]$.
\end{lemma}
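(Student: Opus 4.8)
The plan is to determine exactly which conjugate of which Lyndon factor sits in the first slot of \BBWT{}, and to check that it ends with $T[n]$. First I would observe that each Lyndon factor is the $\OmegaOrder$-minimum of its own conjugate class: a Lyndon word $w$ is primitive, so the conjugates $\conj[0]{w},\ldots,\conj[|w|-1]{w}$ are pairwise distinct strings of the same length, and $w=\conj[0]{w}$ is strictly $\LexOrder$-smaller than each of the others; since two distinct equal-length strings are never one a prefix of the other, the first mismatching position is the same for the $\LexOrder$- and the $\OmegaOrder$-comparison, so $w\OmegaOrder\conj[i]{w}$ for all $i\in[1\twodots{}|w|-1]$. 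Consequently the $\OmegaOrder$-minimum of the whole multiset of conjugates of $T_1,\ldots,T_t$ is the $\OmegaOrder$-minimum of the strings $T_1,\ldots,T_t$ themselves, and $\BBWT[1]$ is the last character of that minimum. (The same equal-length argument, together with the fact that distinct primitive words have distinct $\omega$-powers, shows $\OmegaOrder$ is a strict total order on Lyndon words, so this minimum is a well-defined string.)

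Next I would reduce the statement to the claim that, for Lyndon words $u$ and $v$, $u\LexOrder v$ implies $u\OmegaOrder v$. Granting this, the defining chain $T_1\LexOrderReq\ldots\LexOrderReq T_t$ of the Lyndon factorization gives $T_t\OmegaOrder T_x$ whenever $T_x\neq T_t$; hence $T_t$ is the $\OmegaOrder$-minimum among the Lyndon factors, and by the previous paragraph $\BBWT[1]$ equals the last character of $T_t$, which is $T[n]$.

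The real work is the claim, and I expect its prefix case to be the main obstacle. If neither of $u,v$ is a prefix of the other, the first mismatch lies within $\min(|u|,|v|)$ and the two comparisons agree as before; and $v$ cannot be a proper prefix of $u$, because that would force $v\LexOrderEq u$. So assume $u$ is a proper prefix of $v$ and write $v=u^m y$ with $m\geq 1$ chosen maximal so that $u^m$ is a prefix of $v$. Primitivity of $v$ forces $y$ to be nonempty, and maximality of $m$ forces $u$ not to be a prefix of $y$. Since $m\geq 1$, $y$ is a proper suffix of the Lyndon word $v$, so $v\LexOrder y$; because $v$ begins with $u^m$, and hence with every prefix of $u$, this relation also rules out $y$ being a prefix of $u$. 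Therefore $u$ and $y$ first disagree at some position $j\leq\min(|u|,|y|)$, and evaluating $v\LexOrder y$ at position $j$ (which lies inside the leading copy of $u$ in $v$) yields $u[j]<y[j]$. It remains to compare $u^\omega=u^m u u\cdots$ with $v^\omega=u^m\,y\,u^m\,y\cdots$: they agree on the first $m|u|$ symbols and then on the first $j-1$ symbols of the following block, after which $u^\omega$ has $u[j]<y[j]$, so $u^\omega\LexOrder v^\omega$, i.e.\ $u\OmegaOrder v$. The delicate points are choosing the decomposition $v=u^m y$ with $m$ maximal (so that $u$ and $y$ genuinely mismatch) and using the defining property of Lyndon words — being $\LexOrder$-smaller than every proper suffix — rather than mere primitivity; everything else is bookkeeping.
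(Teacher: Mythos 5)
Your proof is correct and follows essentially the same route as the paper: the factorization being lexicographically non-increasing together with the minimality of each Lyndon factor within its own conjugate class makes $T_t$ the $\OmegaOrder$-smallest string among all conjugates of all Lyndon factors, so $\BBWT[1] = T_t[|T_t|] = T[n]$. The only real difference is that you prove, rather than cite, the auxiliary fact that $\LexOrder$ and $\OmegaOrder$ agree on Lyndon words (the paper's one-line argument compares lexicographically and implicitly relies on Bonomo et al.'s Thm.~8 for this), and your treatment of the delicate prefix case via the maximal decomposition $v = u^m y$ is sound.
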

\begin{proof}
	There is no conjugate of a Lyndon factor that is smaller than the smallest Lyndon factor~$T_t$ since 
	$T_t \LexOrderEq T_x \LexOrder T_x[j\twodots{}]$ for every $j \in [2\twodots{}|T_x|]$ and every $x \in [1\twodots{}t]$.
	Therefore, $T_t$ is the smallest string among all conjugates of all Lyndon factors.
	Hence, $\BBWT[1]$ is the last character of $T_t$, which is $T[n]$.
\end{proof}

The BWT of $T$, called in the following \BWT{}, is the BBWT of $\texttt{\$}T$ for a delimiter $\texttt{\$} \not\in \Sigma$ smaller than all other characters in $T$ (cf.~\cite[Lemma~12]{giancarlo07principles} since $\texttt{\$}T$ is a Lyndon word).
Originally, the BWT is defined by reading the last characters of all cyclic rotations of~$T$ (without \texttt{\$}) sorted lexicographically~\cite{burrows94bwt}.
Here, we call the resulting string \BWTC{}. 
\BWTC{} is equivalent to \BWT{} if $T$ contains the aforementioned unique delimiter~\texttt{\$}.
We further write $\BWT_P$ (and analogously $\BBWT_P$ or $\BWTC_P$) to denote the BWT of $P$ for a string~$P$.

Since $\BWT$ (and analogously $\BBWT$ or $\BWTC$) is a permutation of~$T$, 
it is natural to identify each entry of~$\BWT$ with a text position:
By construction $\BWT[i] = T[j]$, where $T[j+1\twodots{}]$ is the $i$-th lexicographically smallest suffix, i.e., $\SA[i] = j+1$, where $\SA$ is the suffix array of $T$.
A similar relation is given between $\BBWT$ and the circular suffix array~\cite{hon12ebwt,bannai19bbwtconstarxiv}, which is 
uniquely defined up to positions of equal Lyndon factors.
\Cref{figBWTs} gives an example for all three variants.
In what follows, we review means to simulate a linear traversal of the text in forward or backward manner by \BWT{}, and then translate this result to \BBWT{}.

\subsection{Backward and Forward Steps}\label{secSearchSteps}
Having the location of $T[i]$ in \BWT{},
we can compute $T[i+1]$ (i.e., $T[1]$ for $i=1$) and $T[i-1]$ (i.e., $T[n]$ for $i=0$) by rank and select queries.
To move from~$T[i]$ to $T[i+1]$, which we call a \teigi{forward step}, we can use the \FL{} mapping:
\begin{equation}\label{eqForwardSearch}
	\FL[i] := \BWT.\select_{\First[i]}( \First.\rank_{\First[i]}(i)),
\end{equation}
where $\First[i]$ is the $i$-th lexicographically smallest character in \BWT{}.
To move from~$T[i]$ to $T[i-1]$, we can use the \teigi{backward step} of the FM-index~\cite{ferragina00fmindex},
which is also called \LF{} mapping, and is defined as follows:
\begin{equation}\label{eqBackwardSearch}
	\LF[i] := \First.\select_{\BWT[i]}( \BWT.\rank_{\BWT[i]}(i))
	= C[\BWT[i]] + \BWT.\rank_{\BWT[i]}(i),
\end{equation}
where $C[c]$ is the number of occurrences of those characters in~$\BWT$ that are smaller than $c$ (for each character~$c \in [1\twodots{}\sigma]$).
We observe from the second equation of~(\ref{eqBackwardSearch}) that there is no need for~$\First$ when having~$C$.
This is important, as we can compute $C[i]$ in \Oh{n} time only having \BWT{} available.
Hence, we can compute $\LF[i]$ in \Oh{n} time in-place.
However, the same trick does not work with $\FL[i] = \BWT.\select_{\First[i]}(i - C[\First[i]])$.
To lookup $\First[i]$, we can use the selection algorithm of {Chan et al.}~\citet{chan18restore}
using \BWT{} and \Oh{\lg n} bits as working space (the algorithm restores \BWT{} after execution) to compute an entry of~$\First$ in \Oh{n} time.

In summary, we can compute both $\FL[i]$ and $\LF[i]$ in-place in \Oh{n} time.
The algorithm of {Crochemore et al.}~\citet[Thm.~2]{crochemore15bwt} inverting \BWT{} in-place in \Oh{n^{2+\epsilon}} time 
uses the result of {Munro and Raman}~\citet{munro96selection} computing $\First[i]$ in \Oh{n^{1+\epsilon}} time for a constant $\epsilon > 0$ in the comparison model.
As noted by {Chan et al.}~\citet[Sect.~1]{chan18restore}, the time bound for the inversion can be improved to \Oh{n^2} time in the RAM model under the assumption that \BWT{} is rewritable.

If we allow more space, it is still
advantageous to favor storing $C$ instead of $\First$ if $\sigma = \oh{n}$ because storing $\First$ and $C$ in their plain forms take $n \lg \sigma$ bits and $\sigma \lg n$ bits, respectively.
To compute $\FL[i]$, we can also compute \FL{} without~$\First$ by endowing~$C$ with a predecessor data structure (which we do in \cref{secToRLBBWT}).

Finally, we also need \LF{} and \FL{} on \BBWT{} for our conversion algorithms.
We can define \LF{} and \FL{} similarly for \BBWT{} with the following peculiarity:

\noindent\begin{minipage}{0.85\linewidth}
\subsection{Steps in the Bijective BWT}\label{secRewinding}
The major difference to the BWT is that the LF mapping of the BBWT can contain multiple cycles, meaning 
that \LF{} (or \FL{}) recursively applied to a \BBWT{} position would result 
in searching circular (more precisely, the search stays within the same Lyndon factor).
This is because \BBWT{} is the extended BWT~\cite[Thm.~20 and Remark~12]{mantaci07ebwt} applied to the multiset of Lyndon factors~$\{T_1, \ldots, T_t\}$.
This fact was exploited for circular pattern matching~\cite{hon12ebwt}, but is not of interest here.
\end{minipage}
\hfill
\begin{minipage}{0.13\linewidth}
		\includegraphics[width=\ImgWidth]{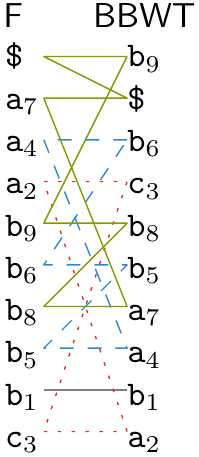}
\end{minipage}

Instead, we follow the analysis of the so-called \emph{rewindings}~\cite[Sect.~3]{bannai19bbwt}:
Remembering that we store the last character of all conjugates of all Lyndon factors in \BBWT{},
we observe that the entries in \BBWT{} representing the Lyndon factors (i.e., the last characters of the Lyndon factors)
are in sorted order (starting with $T_t[|T_t|]$ and ending with $T_1[|T_1|]$).
That is because the lexicographic order and the \OmegaOrder{} order are the same for Lyndon words~\cite[Thm.~8]{bonomo14sorting}.
Applying the backward step at such an entry results in a rewinding, i.e., 
we can move from the beginning of a Lyndon factor~$T_x$ (represented by $T_x[|T_x|]$ in \BBWT{})
to the end of~$T_x$ (represented by $T_x[1]$ in \BBWT{}) with one backward step.
We use this property with \cref{lemBBWTFirstChar} in the following sections to read the Lyndon factors from~$T$ individually in the order $T_t, \ldots, T_1$.

\section{Run-Length Compressed Conversions}\label{secRunLengthCompressed}

We now consider \BWT{} and \BBWT{} represented as run-length compressed strings taking
\Oh{\runBWT \lg n} and \Oh{\runBBWT \lg n} bits of space,
where \runBWT{} and \runBBWT{} are the number of character runs in \BWT{} and \BBWT{}, respectively.
For $r := \max(\runBWT, \runBBWT)$, the goal of this section is the following:

\begin{theorem}\label{thmRunLength}
We can convert~\RLBBWT{} to~\RLBWT{} in \Oh{n \lg r / \lg \lg r} time using \Oh{r \lg n} bits as working space, or vice versa.
\end{theorem}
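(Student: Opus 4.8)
The plan is to use the text~$T$ itself as a pivot: decode $T$ from the source transform and re-encode it into the target transform, performing every \LF{}/\FL{} step and every access, rank, select, and insert on a \emph{dynamic run-length compressed wavelet tree}. Concretely, I would take the run-length compressed dynamic string of {Policriti and Prezza}~\cite{policriti18lz77}, augmented so that each of these operations runs in \Oh{\lg r/\lg\lg r} time rather than \Oh{\lg r} -- e.g.\ by replacing the balanced binary tree over the runs with a $B$-tree of branching factor \Ot{\lg^{\epsilon}n} -- while still occupying \Oh{r\lg n} bits. Since each direction then consists of a streaming decoder of~$T$ followed (or interleaved) by a streaming online encoder, each issuing only \Oh{n} such operations, the time bound \Oh{n\lg r/\lg\lg r} follows.

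\textbf{From \RLBBWT{} to \RLBWT{}.} I would peel off the Lyndon factors one by one in the order $T_t,T_{t-1},\dots,T_1$. By \cref{lemBBWTFirstChar}, $\BBWT[1]=T[n]$ is the entry of the smallest Lyndon factor $T_t$; iterating \LF{} from position~$1$ (the rewinding behaviour of \cref{secRewinding}) runs through exactly the conjugates of $T_t$, spells $T_t$ from right to left, and returns to position~$1$ after $|T_t|$ steps. Having collected the $|T_t|$ positions of this cycle, I delete them from the run-length compressed \BBWT{}: because the \BBWT{} is the extended BWT of the multiset $\{T_1,\dots,T_t\}$ and removing one element's conjugates leaves the relative order of the remaining conjugates intact, the result is exactly the \RLBBWT{} of $\{T_1,\dots,T_{t-1}\}$, whose first entry again represents the next-smallest Lyndon factor. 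Iterating until the string is empty reads off $T_t,\dots,T_1$, each from right to left, i.e.\ $T$ read from right to left. I feed this stream into the classical backward (left-extension) online BWT construction -- maintaining the \BWT{} of a growing suffix of $\texttt{\$}T$ and, per character, doing a constant number of rank/select queries plus one insertion into the dynamic structure -- so that after $n$ characters it holds \RLBWT{}. Peeling a factor $T_x$ costs \Oh{|T_x|\lg r/\lg\lg r} (cycle walk and deletions) and each appended character costs \Oh{\lg r/\lg\lg r}, giving \Oh{n\lg r/\lg\lg r} in total.

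\textbf{From \RLBWT{} to \RLBBWT{}.} Here I need $T$ in forward order: starting from the \BWT{}-position of~$\texttt{\$}$ and iterating \FL{} gives forward access $T[1],T[2],\dots$ in \Oh{\lg r/\lg\lg r} per character. On top of this simulated scan I run Duval's algorithm (\cref{lemLyndonFactorizationLinearTime}), which uses only a constant number of text pointers that advance by one position or are reset to another pointer -- realised as \BWT{}-positions advanced by \FL{} -- and which emits $T_1,\dots,T_t$ in \Oh{n} steps. As each factor $T_x$ is emitted I feed its characters into the online extended-BWT construction of {Bonomo et al.}~\cite{bonomo14sorting} (the \LF{}-simulated construction we use in \cref{secConstructBWTC}), which merges the $|T_x|$ conjugates of $T_x$ into the current run-length compressed extended BWT in \Oh{|T_x|\lg r/\lg\lg r} time; after all factors are processed this is \RLBBWT{}. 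Duval's scan touches \Oh{n} characters and the construction inserts $n$ characters overall, so again the cost is \Oh{n\lg r/\lg\lg r}.

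\textbf{Where the difficulty lies.} The routine ingredients are Duval's algorithm and the two online constructions; the real work is data-structural. First, one must actually exhibit a dynamic run-length compressed string supporting \emph{all} of access, rank, select, and insertion in \Oh{\lg r/\lg\lg r} time within \Oh{r\lg n} bits -- Policriti and Prezza give \Oh{\lg r}, and shaving the $\lg\lg r$ factor while retaining insertions is the main technical point. Second, and more delicate, is the space bound itself: the \BWT{} of a suffix of the text, or the \BBWT{} of a sub-multiset of the Lyndon factors, may have more runs than the final transform, so one has to argue that the number of runs stays within a constant factor of $r$ throughout the online construction (or state the working space in terms of that peak). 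I expect this run-count argument to be the chief obstacle; the decoding side is comparatively safe, since deleting conjugates from a run-length compressed string can only shorten or merge runs, never create new ones, so the shrinking \RLBBWT{} never exceeds its initial \Oh{\runBBWT\lg n} bits.
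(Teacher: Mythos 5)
Your overall route is the same as the paper's: for \RLBBWT{} to \RLBWT{} you peel the Lyndon factors $T_t,\ldots,T_1$ off \BBWT{} by walking the \LF{}-cycle that starts at position~$1$ (\cref{lemBBWTFirstChar}), delete the visited cycle, and stream the reversed text into the online \RLBWT{} construction of \cref{corTextToRLBWT}; for the converse you simulate Duval's algorithm with \FL{} steps on \RLBWT{} and insert each detected factor into a dynamic \RLBBWT{} with the procedure of Bonomo et al.~\cite{bonomo14sorting}. This is exactly what the paper does in \cref{secToRLBWT,secToRLBBWT}.

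However, what you leave open as ``the chief obstacle''---a dynamic run-length compressed string with $\Oh{\lg r/\lg\lg r}$-time access, rank, select and updates in $\Oh{r\lg n}$ bits---is precisely the ingredient the paper supplies, and it does so without new machinery: in Policriti--Prezza's representation it replaces the dynamic wavelet tree over the $r$ run-head characters by that of Navarro and Nekrich~\cite{navarro14dynamic} and the structure over the run lengths by the Fenwick tree of Bille et al.~\cite{bille18partial}, whose updates here are only in-/decremental and hence take constant time (\cref{secRLwavelet}). Your sketched $B$-tree of branching factor $\lg^{\epsilon}n$ is not worked out and, as stated, does not yet give the claimed bound, so this step is a genuine gap in your write-up. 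Two smaller omissions: (i) an \FL{} step needs $\First[i]$, which the run-length structure does not provide directly; the paper simulates it by a predecessor search over the $C$ array stored in a fusion tree~\cite{fredman93fusion}, costing $\Oh{\lg\sigma/\lg\lg\sigma}=\Oh{\lg r/\lg\lg r}$ since $\sigma\le r$. (ii) The insertion routine of \cref{algoConstructBBWT} consumes a factor $T_x$ from its last character down to its first, while your Duval scan emits it front to back; buffering $T_x$ may take $|T_x|\lg\sigma$ bits, which can exceed the allowed working space, so the paper instead remembers the \BWT{} positions of the first and last character of $T_x$ and re-reads the factor in reverse by backward steps. Your final worry about intermediate transforms having more runs than the final one is not argued in the paper either; it is inherited from the cited online construction of Policriti and Prezza, so it does not distinguish the two proofs.
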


To prove this theorem, we need a data structure that works in the run-length compressed space 
while supporting rank and select queries as well as updates more efficiently than the \Oh{n} time in-place approach described in \cref{secSearchSteps,secRewinding}:

\subsection{Run-length Compressed Wavelet Trees}\label{secRLwavelet}
Given a run-length compressed string~$S$ of uncompressed length~$n$ with $r$ character runs, there is an \Oh{r \lg n} bits representation of~$S$
that supports access, rank, select, insertions, and deletions in \Oh{\lg r} time~\cite[Lemma~1]{policriti18lz77}.
It consists of (1) a dynamic wavelet tree maintaining the starting characters of each character run and (2) a dynamic Fenwick tree maintaining the 
lengths of the runs.
It can be accelerated to \Oh{\lg r / \lg \lg r} time by using the following representations:
\begin{enumerate}
	\item The dynamic wavelet tree of {Navarro and Nekrich}~\citet{navarro14dynamic} on a text of length~$r$ uses \Oh{r \lg r} bits, and supports both updates and queries in \Oh{\lg r / \lg \lg r} time.
	\item The dynamic Fenwick tree of Bille et al~\cite[Thm.~2]{bille18partial} on $r$ $(\lg n)$-bit numbers uses \Oh{r \lg n} bits, 
		and supports both updates and queries in constant time if updates are restricted to be in-/decremental.
\end{enumerate}
The obtained time complexity of this data structure directly improves the construction of \RLBWT{}:
\begin{corollary}[{\cite[Thm.~2]{policriti18lz77}}] \label{corTextToRLBWT}
	We can construct the \RLBWT{} in \Oh{\runBWT \lg n} bits of space online on the reversed text in \Oh{n \lg \runBWT / \lg \lg \runBWT} time.
\end{corollary}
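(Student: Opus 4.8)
The plan is to run the online \BWT{} construction of Policriti and Prezza unchanged and to replace only its underlying dynamic string by the accelerated run-length compressed representation assembled in \cref{secRLwavelet}. Their algorithm reads $T$ from right to left; after having inserted~$T[i]$ it keeps $\BWT_{T[i\twodots{}n]}$ as a dynamic run-length compressed string, and the next round computes, by one \LF{}-style step, the position at which the character~$T[i-1]$ has to be inserted and then performs that single-character insertion. Inserting one character into a run-length encoded string affects at most one run: it either lengthens a run, or creates one fresh run of length~$1$, or it additionally splits one run into two runs of the same character. Hence each of the $n$ rounds issues $\Oh{1}$ access, rank, select, and partial-sum queries, $\Oh{1}$ run insertions, one $\pm 1$ length update, and at most one run split.

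First I would verify that the number of queries per round is indeed $\Oh{1}$: written run-wise, the \LF{} computation of~(\ref{eqBackwardSearch}) amounts to a search for the run containing the current position followed by a constant number of rank and partial-sum queries. Then I would route each operation to the appropriate half of the accelerated structure: everything about the run \emph{heads} --- access, rank, select, and the insertion or deletion of a run --- goes to the dynamic wavelet tree of {Navarro and Nekrich}~\cite{navarro14dynamic} at $\Oh{\lg \runBWT / \lg \lg \runBWT}$ per operation, while everything about the run \emph{lengths} --- search, partial sums, the incremental lengthening of a run, and the insertion of a unit-length run --- goes to the dynamic Fenwick tree of Bille et al.~\cite{bille18partial}. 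The single remaining operation, splitting a stored length~$\ell$ into $\ell_1$ and $\ell - \ell_1$, I would implement as a weight-preserving replacement of one leaf by two leaves of combined weight~$\ell$; this leaves all ancestor prefix sums untouched and therefore costs $\Oh{\lg \runBWT / \lg \lg \runBWT}$ rather than the $\Ot{\ell}$ of naively decrementing. Summing the resulting $\Oh{\lg \runBWT / \lg \lg \runBWT}$ per-round cost over all $n$ rounds yields the claimed running time.

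For the working space I would invoke the analysis in~\cite[Thm.~2]{policriti18lz77}, by which every partial \BWT{} met during the scan has $\Oh{\runBWT}$ runs; then the run-head wavelet tree occupies $\Oh{\runBWT \lg \runBWT}$ bits and the run-length Fenwick tree $\Oh{\runBWT \lg n}$ bits, for $\Oh{\runBWT \lg n}$ bits in total, and trading the $\Oh{\lg \runBWT}$-time building block for the faster one does not change this. The algorithm is online on the reversed text because each round reads only one further character from the right end of~$T$.

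The step I expect to be the crux is this very run split inside the Fenwick tree: it is the only modification of an already stored value that is not in-/decremental, so one must make sure the representation of~\cite{bille18partial} either already offers, or can be extended to offer, such a split --- equivalently, a weight-preserving leaf insertion --- within the $\Oh{\lg \runBWT / \lg \lg \runBWT}$ budget. With that in hand, the remainder is a routine substitution into the interface already fixed in \cref{secRLwavelet}.
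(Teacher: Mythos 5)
Your proposal matches the paper's own derivation: the corollary is obtained precisely by plugging the accelerated run-length compressed representation of \cref{secRLwavelet} into the online construction of Policriti and Prezza~\cite{policriti18lz77}, so that each of the $n$ insertion/backward-step rounds costs $\Oh{\lg \runBWT / \lg \lg \runBWT}$ instead of $\Oh{\lg \runBWT}$, within $\Oh{\runBWT \lg n}$ bits. The run-splitting issue you flag is a detail of the data-structure claim already asserted in \cref{secRLwavelet} (which the corollary takes as given), not an additional step in the paper's argument for this statement.
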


In the run-length compressed wavelet tree representation,
\RLBWT{} and \RLBBWT{} support an update operation and a backward step in \Oh{\lg r / \lg \lg r} time with $r := \max(\runBWT, \runBBWT)$.
This helps us to devise the following two conversions:

\subsection{From \RLBBWT{} to \RLBWT{}}\label{secToRLBWT}
We aim for directly outputting the characters of~$T$ in reversed order
since we can then use the algorithm of \cref{corTextToRLBWT} building \RLBWT{} online on the reversed text.
We start with the first entry of \BBWT{} (corresponding to the last Lyndon factor~$T_t$, i.e., storing $T_t[|T_t|] = T[n]$ according to \cref{lemBBWTFirstChar}) and do a backward step until we come back at this first entry (i.e., we have visited all characters of~$T_t$).
During that search, we copy the read characters to \RLBWT{} and mark in an array~$R$ of length~$\runBBWT$ at entry~$i$ how often we visited the $i$-th character run of \RLBBWT{}.
Finally, we remove the read cycle of \RLBBWT{} by decreasing the run lengths of \RLBBWT{} by the numbers stored in~$R$.
By doing so, we remove the last Lyndon factor~$T_t$ from \RLBBWT{} and consequently know that the currently first entry of \BBWT{} must correspond to $T_{t-1}$.
This means that we can apply the algorithm recursively on the remaining \RLBBWT{} to extract and delete the Lyndon factors in reversed order 
while building \RLBWT{} in the meantime.
By removing~$T_t$, \BBWT{} is still a valid BBWT since \BBWT{} becomes the BBWT of $T' := T_1 \cdots T_{t-1}$ whose Lyndon factors are the same as of~$T$ (but without $T_t$).
Note that it is also possible to build \RLBWT{} in forward order, i.e., computing $\RLBWT_{T_1 \cdots T_x}$ for increasing $x$
by applying the algorithm of {Mantaci et al.}~\cite[Fig.~1]{mantaci14sa} while omitting the suffix array construction.

\subsection{From \RLBWT{} to \RLBBWT{}}\label{secToRLBBWT}
To build \BBWT{}, we need to be aware of the Lyndon factors of~$T$, which we compute with \cref{lemLyndonFactorizationLinearTime} by simulating a forward scan on~$T$ with \FL{} on \BWT{}.
To this end, we store the entries of the $C$ array in a Fusion tree~\cite{fredman93fusion} using \Oh{\sigma \lg n} bits and supporting predecessor search in $\Oh{\lg \sigma / \lg \lg \sigma} = \Oh{\lg r / \lg \lg r}$ time.\footnote{We assume that the alphabet~$\Sigma$ is \emph{effective}, i.e., that each character of~$\Sigma$ appears at least once in~$T$. Otherwise, assume that $T$ uses $\sigma'$ characters.
	Then we build the static dictionary of {Hagerup}~\citet{hagerup99dictionary} in \Oh{\sigma' \lg \sigma'} time, supporting access to a character in $\Oh{\lg \lg \sigma'} = \Oh{\lg \lg r}$ time, assigning each of the $\sigma'$ characters an integer from $[1\twodots{}\sigma']$. 
	We further map $\RLBWT$ to the alphabet $[1\twodots{}\sigma']$, which can be done in \Oh{r} time by using \Oh{r \lg n} space for a linear-time integer sorting algorithm.
}
This time complexity also covers a forward step in \RLBWT{} by simulating~$\First$ with the Fusion tree on~$C$.
Hence, this fusion tree allows us to apply \cref{lemLyndonFactorizationLinearTime} computing the Lyndon factorization of~$T$
with a multiplicative \Oh{\lg r / \lg \lg r} time penalty
since this algorithm only needs to perform forward traversals.
The starting point of such a traversal is the position~$i$ with $\BWT[i] = \texttt{\$}$ because $\FL[i]$ returns the first character of $T$.
Whenever we detect a Lyndon factor~$T_x$ (starting with $x=1$), we copy this factor to our dynamic \RLBBWT{}.
For that, we always maintain the first and the last position of~$T_x$ in memory.
Having the last position of~$T_x$, we perform backward steps on \RLBWT{} until returning at the first position of~$T_x$ to read the characters of~$T_x$ in reversed order.
Then we continue with the algorithm of \cref{lemLyndonFactorizationLinearTime} at the position after~$T_x$ (for recursing on $T_{x+1}$).
Inserting a Lyndon factor into \RLBBWT{} works exactly as sketched by {Bonomo et al.}~\citet[Thm.~17]{bonomo14sorting} or in \cref{algoConstructBBWT} in the appendix (we review this algorithm in detail in \cref{secConstructBWTC}).

\section{In-Place Conversions}\label{secInPlaceConversions}
We finally present our in-place conversions that work in quadratic time by computing \LF{} or \FL{} in \Oh{n} time having only stored either \BWT{}, \BBWT{}, or \BWTC{}.
We note that the constructions from the text also work in the comparison model, 
while inverting a transform or converting two different transforms have a multiplicative \Oh{n^\epsilon} time penalty as the fastest option to access \First{} in the comparison model uses \Oh{n^{1+\epsilon}} time for a constant $\epsilon > 0$~\cite{munro96selection}.
We start with the construction and inversion of \BWTC{} (\cref{secConstructBWTC,secInvertBWTC}), 
where we show (a) that we can construct \BWTC{} from the text in the same manner as {Bonomo et al.}~\citet{bonomo14sorting} construct \BBWT{},
and (b) that the latter construction works also in-place.
Next, we show in \cref{secInvertBBWT} how to invert~\BBWT{} with the BWT inversion algorithm of {Crochemore et al.}~\citet[Fig.~3]{crochemore15bwt}, which allows us to also convert \BBWT{} to \BWT{} with the BWT construction algorithm of the same paper~\cite[Fig.~2]{crochemore15bwt}.
Finally, we show a conversion from \BWT{} to \BBWT{} in \cref{secBWTtoBBWT}.
An overview is given in \cref{tableOverviewInPlace}.

\begin{table}
	\centerline{\begin{tabular}{rrrrr}
			\toprule
			\diagbox{To}{From} & $T$ & $\BWT$ & $\BBWT$ & $\BWTC$ \\
			\midrule
			$T$     & \textbackslash              & \cite[Fig.~3]{crochemore15bwt}  & \cref{secInvertBBWT}  & \cref{secInvertBWTC}  \\
			$\BWT$  & \cite[Fig.~2]{crochemore15bwt}  & \textbackslash             & \cref{secInvertBBWT}  & \\
			$\BBWT$ & \cref{secConstructBWTC}      & \cref{secBWTtoBBWT}  & \textbackslash             & \\
			$\BWTC$ & \cref{secConstructBWTC}         &                            &                     & \textbackslash
			\\\bottomrule
		\end{tabular}
	}\caption{Overview of in-place conversions in focus of \cref{secInPlaceConversions} working in quadratic time.
}
	\label{tableOverviewInPlace}
\end{table}

\subsection{Constructing \BWTC{} and \BBWT{}}\label{secConstructBWTC}
We can compute \BWTC{} and \BBWT{} from~$T$ with the algorithm of {Bonomo et al.}~\citet{bonomo14sorting} computing the extended BWT~\cite{mantaci07ebwt}.
The extended BWT is the BWT defined on a set of primitive strings. 
As stated in \cref{secRewinding}, the extended BWT coincides with \BBWT{} if this set of primitive strings is the set of Lyndon factors of~$T$~\cite[Thm.~14]{bonomo14sorting}.
We briefly describe the algorithm of {Bonomo et al.}~\citet{bonomo14sorting} for computing the BBWT (cf.~\cref{figConstructBBWT}\ and \cref{algoConstructBBWT} in the appendix):
For each Lyndon factor~$T_x$ (starting with $x = 1$), prepend $T_x[|T_x|]$ to \BBWT{}.
To insert the remaining characters of the factor~$T_x$, let $p \gets 1$ be the position of the currently inserted character.
Then perform, for each $j = |T_x| - 1$ down to $1$, a backward step $p \gets \LF[p] + 1$, and insert $T_x[j]$ at $\BBWT[p]$\ (cf.~\cref{algoConstructBBWT} in the appendix).
To understand why this computes \BBWT{}, 
we observe that the last character of the most recently inserted Lyndon factor~$T_x$ is always the first character in $\BBWT_{T_1 \cdots T_x}$ 
according to \cref{lemBBWTFirstChar}.
By recursively inserting the preceding character at the place returned by a backward step, 
we precisely insert this character at the position where we would expect it (another backward step from the same position~$p$ would then return the inserted character).
Using only $n$ backward steps and $n$ insertions, this algorithm works in-place in \Oh{n^2} time by simulating \LF{} as described in \cref{secSearchSteps}.

Consequently, we can build \BWTC{} if $T$ is a Lyndon word since in this case \BWTC{} and \BBWT{} coincide~\cite[Lemma~12]{giancarlo07principles}.
That is because sorting the suffixes of~$T$ is equivalent to sorting the conjugates of $T$ (if $T$ is a Lyndon word, then its Lyndon factorization consists only of $T$ itself).

It is easy to generalize this to work for a general string~$T$.
First, if $T$ is primitive, then we compute its so-called \teigi{Lyndon conjugate},
i.e., a conjugate of~$T$ that is a Lyndon word.
(The Lyndon conjugate of~$T$ is uniquely defined if $T$ is primitive.)
We can find the Lyndon conjugate of~$T$ in \Oh{n} time with the following two lemmata:

\begin{lemma}[ {\cite[Prop.~1.3]{duval83lyndon}} ] \label{lemLyndonConcat}
	Given two Lyndon words~$S$ and~$T$, $ST$ is a Lyndon word if $S \LexOrder T$.
\end{lemma}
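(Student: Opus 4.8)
The plan is to invoke the suffix characterization of Lyndon words quoted above (Prop.~1.2 of Duval): a nonempty string~$W$ is a Lyndon word if and only if $W \LexOrder V$ for every nonempty proper suffix~$V$ of~$W$. Since $|S|,|T|\ge 1$, the nonempty proper suffixes of $ST$ come in three flavours: $v=T$; a nonempty proper suffix $v$ of $T$ (so $|v|<|T|$); or $v=S'T$ where $S'$ is a nonempty proper suffix of~$S$. I would reduce everything to two claims: (1)~$ST\LexOrder T$, and (2)~$ST\LexOrder S'T$ for every nonempty proper suffix $S'$ of~$S$. Claim~(1) settles the case $v=T$ directly; and for a proper suffix $v$ of $T$ we have $T\LexOrder v$ because $T$ is Lyndon, so transitivity of $\LexOrder$ together with Claim~(1) gives $ST\LexOrder T\LexOrder v$. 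Claim~(2) covers the last flavour.

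For Claim~(2): since $S$ is Lyndon, $S\LexOrder S'$; and because $|S'|<|S|$, the word $S$ cannot be a prefix of $S'$, so the inequality $S\LexOrder S'$ is witnessed by a mismatch position $\ell\le|S'|$ with $S[1\twodots{}\ell-1]=S'[1\twodots{}\ell-1]$ and $S[\ell]<S'[\ell]$. The very same position witnesses $ST\LexOrder S'T$, since both strings begin with $S[1\twodots{}\ell-1]$ and differ at index~$\ell$.

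For Claim~(1) I would split on whether $S$ is a prefix of $T$. If it is not, then $S\LexOrder T$ is witnessed by a mismatch at some $\ell\le\min(|S|,|T|)$, and this position witnesses $ST\LexOrder T$ as well, because $ST$ begins with~$S$. The remaining case is the only real obstacle: $S$ is a \emph{proper} prefix of $T$, say $T=SU$ with $U\neq\varepsilon$ (note $S\neq T$ by strictness, and $T$ being a proper prefix of $S$ is impossible since it would force $T\LexOrder S$, contradicting $S\LexOrder T$). Here I would use that $T$ is Lyndon to get $T\LexOrder U$, i.e.\ $SU\LexOrder U$; since $|SU|>|U|$ this is a genuine mismatch, and prepending one more copy of $S$ to both sides yields $SSU\LexOrder SU$, that is $ST\LexOrder T$.

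Finally, I note that no separate primitivity check is needed: the suffix characterization already excludes $ST$ being a proper power, since if $ST=W^k$ with $k\ge2$ then $W^{k-1}$ is a nonempty proper suffix of $ST$ with $W^{k-1}\LexOrder W^k=ST$, contradicting the characterization. The bookkeeping with mismatch indices is routine; the whole conceptual weight of the lemma sits in the sub-case of Claim~(1) where $S$ is a proper prefix of~$T$ and one must bring in the Lyndon property of~$T$.
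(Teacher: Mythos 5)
Your proof is correct and complete: the case analysis of the proper suffixes of $ST$ (namely $T$, proper suffixes of $T$, and $S'T$ for proper suffixes $S'$ of $S$) is exhaustive, and the one genuinely delicate case, $S$ a proper prefix of $T$, is handled properly via $T \LexOrder U$ for $T=SU$. The paper itself offers no proof here\,---\,the lemma is quoted directly from Duval's Prop.~1.3\,---\,and your argument through the suffix characterization (Prop.~1.2) is the standard proof of that result. One small remark: you implicitly read $S \LexOrder T$ as strict (i.e.\ $S \neq T$), which is the intended reading, since under the paper's literal definition $S = T$ would satisfy $S \LexOrder T$ while $SS$ is not primitive and hence not Lyndon.
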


\begin{lemma}\label{lemLyndonConjugate}
	Given a primitive string~$T$, we can find its Lyndon conjugate in \Oh{n} time with \Oh{\lg n} bits of space.
\end{lemma}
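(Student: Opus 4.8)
The plan is to reduce the problem to finding the lexicographically smallest conjugate of $T$ — which, since $T$ is primitive, is precisely its unique Lyndon conjugate by the definition of a Lyndon word — and to read this smallest conjugate off the Lyndon factorization of the doubled string $TT$ (the concatenation of $T$ with itself). We do not materialize $TT$: an access to position $i \in [1\twodots{}2n]$ of $TT$ is simulated in \Oh{1} time by reading $T[((i-1) \bmod n)+1]$, and the constant number of pointers kept by Duval's algorithm now range over $[1\twodots{}2n]$, so they still fit in \Oh{\lg n} bits. Running the algorithm of \cref{lemLyndonFactorizationLinearTime} on this virtual string of length $2n$ therefore takes \Oh{n} time and \Oh{\lg n} bits of working space and emits, one by one, the Lyndon factors of $TT$ together with their starting positions.

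The key structural fact I would establish is that $TT$ has the Lyndon conjugate $\ell$ of $T$ as one of its Lyndon factors, namely the first emitted factor of length exactly $n$. Let $d \in [0\twodots{}n-1]$ be the (unique, because $T$ is primitive) rotation amount with $\ell = \conj[d]{T}$, and write $T = XY$ with $X = T[1\twodots{}d]$ and $Y = T[d+1\twodots{}n]$, so that $\ell = YX$ and $Y$ is non-empty. Then $TT = XYXY = X\,\ell\,Y$. Let $X_1 \cdots X_p$ and $Y_1 \cdots Y_q$ be the Lyndon factorizations of $X$ and $Y$. If $X$ is non-empty, then $X_p$ is a proper suffix of the Lyndon word $\ell = YX$, hence $\ell \LexOrder X_p$; and since $Y$ is then a proper prefix of $\ell$, we get $Y_1 \LexOrderEq Y \LexOrder \ell$. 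Thus $X_1 \cdots X_p\, \ell\, Y_1 \cdots Y_q$ is a factorization of $TT$ into Lyndon words in non-increasing order, so by uniqueness of the Lyndon factorization it is the Lyndon factorization of $TT$. (If $X$ is empty, then $T = \ell$ is already a Lyndon word and $TT = \ell\ell$, consistent with the same description.) Since $|X_i|, |Y_j| < n$ while $|\ell| = n$, the first Lyndon factor of $TT$ of length exactly $n$ is $\ell$; it starts at position $s := d+1 \in [1\twodots{}n]$, so $\ell = \conj[s-1]{T} = T[s\twodots{}n]\,T[1\twodots{}s-1]$. We therefore run \cref{lemLyndonFactorizationLinearTime} on $TT$ and report the starting position $s$ of its first length-$n$ factor; if a materialized copy of the Lyndon conjugate is needed, rotating $T$ in place by $s-1$ positions costs a further \Oh{n} time and \Oh{\lg n} bits via three in-place reversals.

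The main obstacle is the structural claim about the Lyndon factorization of $TT$: it relies on the suffix/prefix characterization of Lyndon words (a Lyndon word is strictly smaller than all of its proper suffixes and strictly larger than all of its proper prefixes) to obtain the two strict inequalities $\ell \LexOrder X_p$ and $Y_1 \LexOrder \ell$ straddling $\ell$, together with the uniqueness of the Lyndon factorization; the remaining point that needs care is the degenerate case where $T$ is itself a Lyndon word, where $TT = \ell\ell$ and we must take the first of the two length-$n$ factors.
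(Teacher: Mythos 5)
Your proof is correct, but it takes a genuinely different route from the paper. The paper's proof runs Duval's algorithm (\cref{lemLyndonFactorizationLinearTime}) on $T$ itself, takes the starting position of the \emph{last} Lyndon factor~$T_t$ of $T = T_1 \cdots T_t$, and argues via \cref{lemLyndonConcat} (applied recursively, using $T_t \LexOrder T_1$) that $T_t T_1 \cdots T_{t-1}$ is a Lyndon word, hence the Lyndon conjugate. You instead run Duval's algorithm on the virtual doubled string $TT$ and identify the Lyndon conjugate~$\ell$ as the first factor of length exactly~$n$, by showing that $X_1 \cdots X_p\,\ell\,Y_1 \cdots Y_q$ is a factorization into non-increasing Lyndon words and invoking uniqueness of the Chen--Fox--Lyndon factorization --- the classical least-rotation technique. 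Your structural argument is sound: $\ell \LexOrder X_p$ because $X_p$ is a proper suffix of the Lyndon word~$\ell$, $Y_1 \LexOrder \ell$ because $Y_1$ is a proper prefix of~$\ell$, all other factors have length below~$n$, and you handle the degenerate case where $T$ is itself Lyndon ($TT = \ell\ell$). What each approach buys: the paper's version is computationally leaner (one scan over $n$ characters, no index simulation, and the answer is simply where the last emitted factor starts), but it rests on \cref{lemLyndonConcat} and a somewhat terse recursive step; your version scans $2n$ virtual characters and needs the uniqueness of the Lyndon factorization plus the modular-access simulation, but it avoids \cref{lemLyndonConcat} entirely and gives a self-contained correctness argument from the prefix/suffix characterization of Lyndon words. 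Both stay within the \Oh{n} time and \Oh{\lg n} bits required by \cref{lemLyndonConjugate}, and both return the same rotation position.
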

\begin{proof}
	We use \cref{lemLyndonFactorizationLinearTime} to detect the last Lyndon factor~$T_t$ of
	the Lyndon factorization~$T_1 \cdots T_t$ of~$T$ with \Oh{\lg n} bits of working space.
	According to \cref{lemLyndonConcat}, $T_t T_1$ is a Lyndon word since $T_t \LexOrder T_1$, 
	and so is $T_t T_1 \cdots T_{t-1}$ a Lyndon word by a recursive argument.
	Hence, we have found $T$'s Lyndon conjugate.
\end{proof}

Let $\conj[j]{T}$ be the Lyndon conjugate of~$T$ for $j \in [0\twodots{}n-1]$.
Since \BWTC{} is identical to $\BBWT_{\conj[j]{T}}$, we are done by running the algorithm of {Bonomo et al.}~\citet{bonomo14sorting} on $\conj[j]{T}$.
Finally, if $T$ is not primitive, then there is a primitive string~$P$ such that $T = P^k$ for an integer $k \ge 2$.
We can compute $\BWTC_P$ with the above considerations.
For obtaining $\BWTC$, according to~\cite[Prop.~2]{mantaci03sturmian}, we only need to make each character in $\BWTC_P$ to a character run of length~$k$, 
i.e., if $\BWTC_P[i] = \texttt{c}$, we append $\texttt{c}^k$ to \BWTC{} for increasing $i \in [1\twodots{}|P|]$~(cf.~\cite[Thm.~13]{giancarlo07principles}).
Checking whether $T$ is primitive can be done in \Oh{n^2} time by checking for each pair of positions their longest common prefix.
We summarized these steps in the pseudo code of \cref{algoConstructBWTC} in the appendix.

\begin{figure}[t]
	\centering{\includegraphics[width=\ImgWidth]{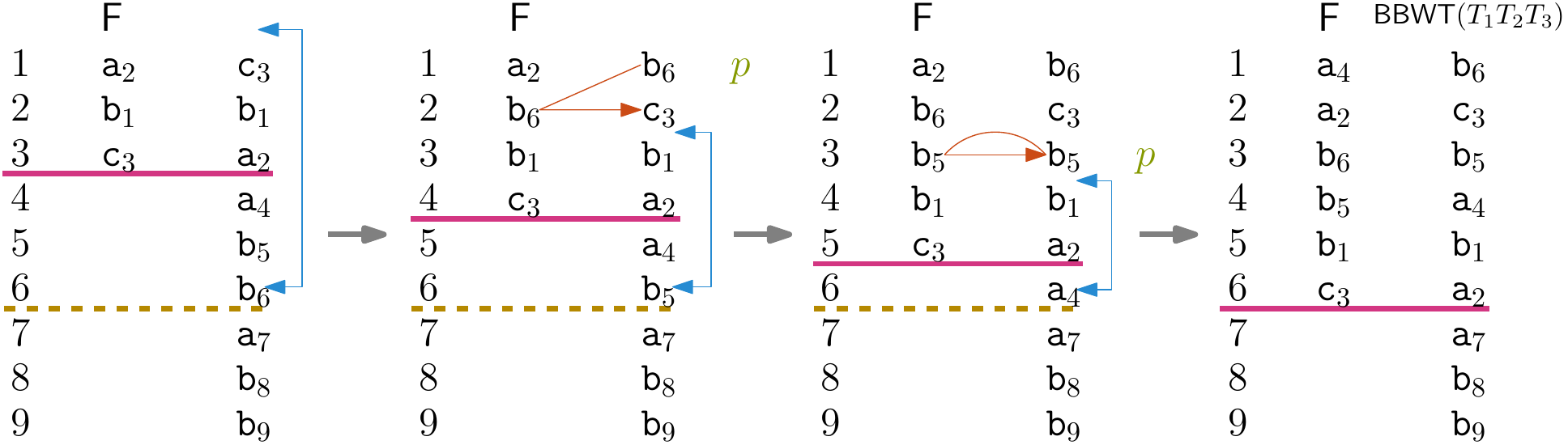}
	}\caption{Computing \BBWT{} from our running example $T = \texttt{\RunningExample}$ in four steps (visualized by four columns separated by three arrows \IroHako{color=gray,-{latex[round]}}), cf.~\cref{secConstructBWTC}.
		In each column, the characters from the top to the solid horizontal line (\IroHako{color=solarizedMagenta,dash pattern=on 10pt off 0pt}) form the currently built BBWT\@.
		The characters below that up to the dashed horizontal line (\IroHako{color=solarizedYellow,dash pattern=on 10pt off 10pt}) are under consideration of being merged into BBWT\@.
		This dashed line is always before the beginning of the next yet unread Lyndon factor.
		\emph{First column}: We have already computed the BBWT of $T_1 T_2 = \texttt{bac}$, which is \texttt{cba}.
		In the following we want to add the next Lyndon factor~$T_3 = \texttt{abb}$ to it.
		For that, we prepend its last character to the currently constructed BBWT.
		\emph{Second column}: We move the last character above the dashed line to the position ${\color{solarizedOrange}\LF}[{\color{solarizedGreen}p}] + 1$ with ${\color{solarizedGreen}p} = 1$, and update ${\color{solarizedGreen}p} \gets {\color{solarizedOrange}\LF}[{\color{solarizedGreen}p}]+1$.
		We recurse in the \emph{third column}, and have produced the BBWT of $T_1 T_2 T_3 = \texttt{bacabb}$ in the \emph{forth column}.
}
\label{figConstructBBWT}
\end{figure}

\subsection{Inverting \BWTC{}}\label{secInvertBWTC}
To invert~$\BWTC$, we use the techniques of {Crochemore et al.}~\citet[Fig.~3]{crochemore15bwt} inverting~\BWT{} in-place in \Oh{n^2}~time.
An invariant is that the BWT entry, whose \FL{} mapping corresponds to the next character to output, is marked with a unique delimiter \texttt{\$}.
Given that $\BWT[i] = \texttt{\$}$, 
the algorithm outputs $\BWT[\FL[i]]$, 
sets $\BWT[\FL[i]] \gets \texttt{\$}$, 
removes $\BWT[i]$, 
and recurses until \texttt{\$} is the last character remaining in \BWT{}.
By doing so, it restores the text in text order.

To adapt this algorithm for inverting~\BWTC{}, we additionally need a pointer~$p$ storing the first symbol of the text (since there is no unique delimiter such as \texttt{\$} in general).
Given that $p$ points to $\BWTC[i]$, 
we set~$i \gets \FL[i]$ and subsequently output $\BWTC[i]$.
From now on, the algorithm works exactly as \cite[Fig.~3]{crochemore15bwt} if we set $\BWTC[i] \gets \texttt{\$}$ after outputting $\BWTC[i]$\ (cf.~\cref{algoInvertBWTC} in the appendix).
More involving is inverting \BBWT{} or converting \BBWT{} to \BWT{}, which we tackle next.

\begin{figure}[t]
	\centering{\includegraphics[width=\ImgWidth]{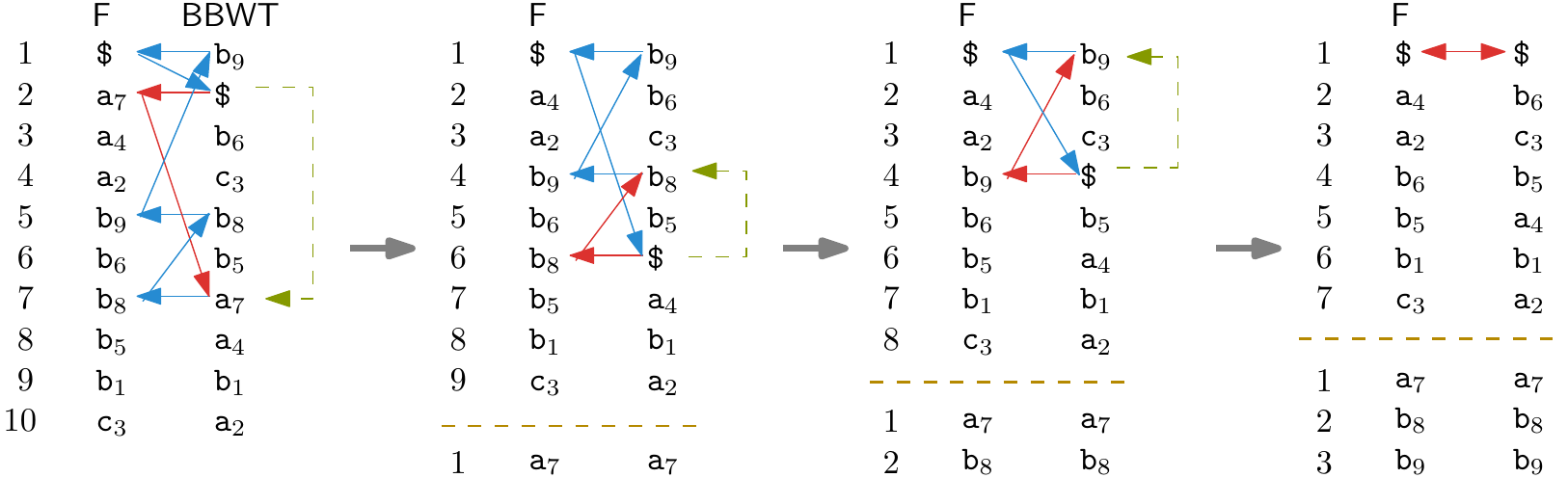}
	}\caption{Inverting \BBWT{} of our running example $T = \texttt{\RunningExample}$ (cf. \cref{secInvertBBWT}).
		\emph{First Column}: We prepend the \texttt{\$} delimiter to the last Lyndon factor~$T_t$ by inserting \texttt{\$} at $\BBWT[2]$.
		A {\color{solarizedRed}forward step} symbolized by the dashed arrow (\IroHako{color=solarizedGreen,dash pattern=on 5pt off 5pt,-{latex},line width=0.3em})
		leads us from \texttt{\$} to the first character of $T_t$.
		\emph{Second Column}: We output $\BBWT[6] = T_t[1] = T[7]$, remove \texttt{\$} and update $\BBWT[6] \gets \texttt{\$}$.
		The output is appended to the string shown below the dashed horizontal line
		(\IroHako{color=solarizedYellow,dash pattern=on 15pt off 10pt,line width=0.4em}).
		We continue with a {\color{solarizedRed}forward step} to access $\BBWT[4] = T_t[2] = T[8]$, and recurse in the \emph{third column}.
		\emph{Forth Column}: Since a {\color{solarizedRed}forward step} returns \texttt{\$}, we know that we have successfully extracted~$T_t = \texttt{abb}$.
}
\label{figInvertBBWT}
\end{figure}

\subsection{Inverting \BBWT{}}\label{secInvertBBWT}
Similarly to \cref{secToRLBWT}, we read the Lyndon factors from \BBWT{} in the order $T_t, \ldots, T_1$, and move each read Lyndon factor directly to a text buffer
such that
while reading the last Lyndon factor~$T_x$ for an $x \in [1\twodots{}t]$ from $\BBWT_{T_1 \cdots T_x}$, we move the characters of $T_x$ to $T_{x+1} \cdots T_t$, 
producing $\BBWT_{T_1 \cdots T_{x-1}}$ and $T_x \cdots T_t$.
This allows us to recurse by reading always the last Lyndon factor~$T_x$ stored in $\BBWT_{T_1 \cdots T_x}$.

Here, we want to apply the inversion algorithm for \BWTC{} described in \cref{secInvertBBWT}.
For adapting this algorithm to work with \BBWT{}, it suffices to insert \texttt{\$} at $\BBWT[2]$ (cf.~\cref{figInvertBBWT}).
By doing so, we add \texttt{\$} to the cycle of the currently last Lyndon factor~$T_x$ stored in \BBWT{}, i.e., we enlarge the Lyndon factor~$T_x$ to $\texttt{\$}T_x$.
That is because (a) $\BBWT[1]$ corresponds to the last character~$T_x[|T_x|]$ of~$T_x$ (cf.~\cref{lemBBWTFirstChar}), and after inserting \texttt{\$},
$\First[1] = \texttt{\$}, \First[2] = T_x[1]$, hence $\FL[1] = 2$ (a forward step on the last character of $T_x$ gives \texttt{\$}) and $\FL[2]$ gives the position in \BBWT{} corresponding to $T_x[1]$.
Moreover, inserting \texttt{\$} makes \BBWT{} the BBWT of $T' := T_1 \cdots T_{x_1} \texttt{\$} T_{x}$, where $\texttt{\$} T_x$ is the last Lyndon factor of~$T'$.
We now use the property that $\texttt{\$}T_x[i\twodots{}|T_x|]$ is a Lyndon word for each $i \in [1\twodots{}|T_x|]$, 
allowing us to perform the inversion steps of {Crochemore et al.}~\citet[Fig.~3]{crochemore15bwt} on~$\BBWT{}$.
By doing so, we can remove the entry of \BBWT{} corresponding to \conj[j]{T_x} for increasing $j \in [0\twodots{}|T_x|-1]$ and prepend the extracted characters to the text buffer storing~$T_{x+1} \cdots T_{t}$ within our working space while keeping \BBWT{} a valid BBWT.

Instead of inverting \BBWT{}, we can convert \BBWT{} to \BWT{} in-place by running the in-place BWT construction algorithm of {Crochemore et al.}~\citet[Fig.~2]{crochemore15bwt} on the text buffer after the extraction of each Lyndon factor. 
Unfortunately, this works not character-wise, but needs a Lyndon factor to be fully extracted before inserting its characters into \BWT{}.
Interestingly, for the other direction (from \BWT{} to \BBWT{}), we can propose a different kind of conversion that works directly on \BWT{} without decoding it.

\subsection{From \BWT{} to \BBWT{} on the Fly}\label{secBWTtoBBWT}

\newcommand*{\Pos}[1]{\ensuremath{i_{\texttt{#1}}}}

Like in \cref{secToRLBBWT}, we process the Lyndon factors of~$T$ individually to compute \BBWT{} by
scanning \BWT{} in text order to simulate \cref{lemLyndonFactorizationLinearTime}.
Suppose that we have built \BWT{} on $T\texttt{\$} \not= \texttt{\$}$ with
\texttt{\$} being the $(t+1)$-th Lyndon factor of $T\texttt{\$}$, 
and suppose that we have detected the first Lyndon factor~$T_1$.
Let \texttt{f} denote the last character of~$T_1$.\footnote{\texttt{f} stands for final character.}
Further let \Pos{f} and \Pos{\$} be the position of the last character of $T_1$ and the last character of $T$, respectively, 
such that $\BWT[\Pos{f}] = \texttt{f}$ and $\BWT[\Pos{\$}] = \texttt{\$}$.
Let $p := \LF[\Pos{f}]$ such that $\First[p] = \texttt{f}$ and $\BWT[p] = T_1[|T_1|-1]$ if $|T_1| > 1$ or $\BWT[p] = \texttt{\$}$ otherwise.
Since $T_1$ and $T_2$ are Lyndon factors, $T_1 \LexOrderReq T_2$.
Consequently, the suffix $T[\ibeg{T_2}\twodots{}]$ (the context of $\BWT[\Pos{f}]$) is lexicographically smaller than 
the suffix $T[\ibeg{T_1}\twodots{}]$ (the context of $\BWT[\Pos{\$}]$), i.e., $\Pos{f} < \Pos{\$}$.
\Cref{figBWTtoBBWTschematic} gives an overview of the introduced setting.

\begin{figure}[t]
	\begin{minipage}{0.5\linewidth}
		\includegraphics[width=\linewidth]{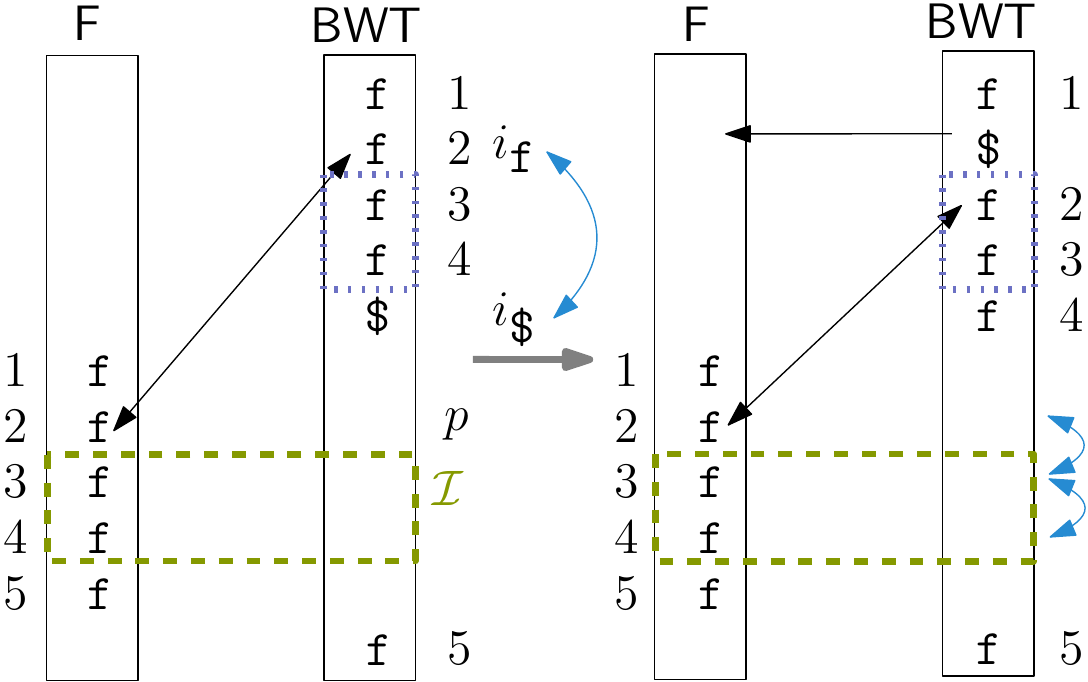}
	\end{minipage}
	\hfill
	\begin{minipage}{0.48\textwidth}
	\caption{Setting of \cref{secBWTtoBBWT} with focus on forming a cycle for a Lyndon factor ending with~\texttt{f} in \BWT{}. 
		\emph{Left}: We exchange $\BWT[\Pos{f}]$ with $\BWT[\Pos{\$}]$ with the aim to form a cycle.
		\emph{Right}: To obtain this cycle we additionally need to swap $\BWT[p]$ with the elements of
		the dashed rectangle~(\IroHako{color=solarizedGreen,dash pattern=on 10pt off 10pt}) corresponding to the interval~{\color{solarizedGreen}$\mathcal{I}$}
		having the same height as the dotted rectangle~(\IroHako{color=solarizedViolet,dash pattern=on 3pt off 6pt}) 
		covering~$\BWT[\Pos{f}+1\twodots{}\Pos{\$}-1]$.
}
\label{figBWTtoBBWTschematic}
\end{minipage}
\end{figure}

\begin{figure}[t]
	\centering{\includegraphics[width=\ImgWidth]{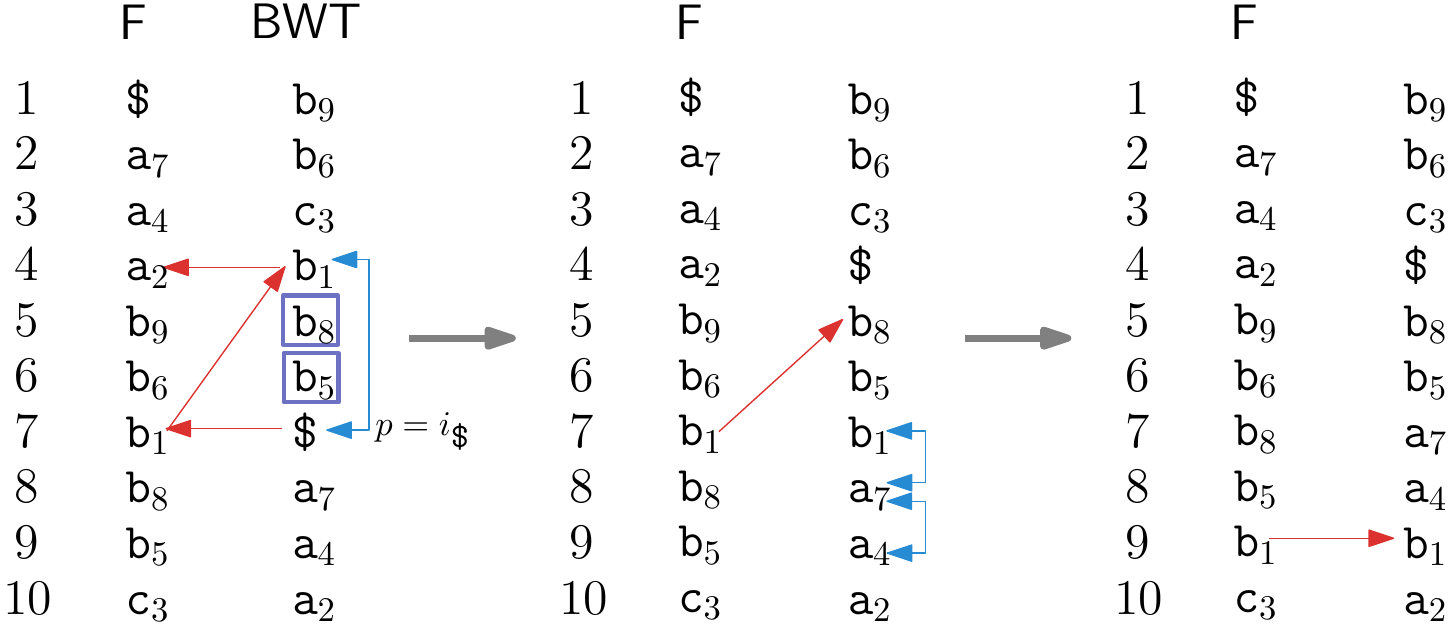}
	}\caption{Computing \BBWT{} from \BWT{} (cf. \cref{secBWTtoBBWT}) of our running example $T = \texttt{\RunningExample{}\$}$.  
		In the \emph{left} column, we find the first Lyndon factor~$T_1 = \texttt{b}$ of $T$ by {\color{solarizedRed}forward steps} with \FL{}.
		Since $|T_1| = 1$, $p = \Pos{\$}$.
		We obtain the \emph{middle} column by {\color{solarizedBlue}exchanging} $\BWT[4]$ with $\BWT[7]= \texttt{\$}$.
	Since there are two {\color{solarizedViolet}\texttt{b}}'s between \texttt{b} at $\BWT[4]$ and \texttt{\$} in the \emph{left} column,
	we need to {\color{solarizedBlue}swap} $\BWT[p]$ with the two elements below of it in the \emph{middle} column.
		This gives a cycle in the \emph{right} column.
		We can recurse since the \FL{} mapping of \texttt{\$} now yields the second character of~$T$.
}
\label{figBWTtoBBWTExample}
\end{figure}

\begin{figure}[t]
	\centering{\includegraphics[width=\ImgWidth]{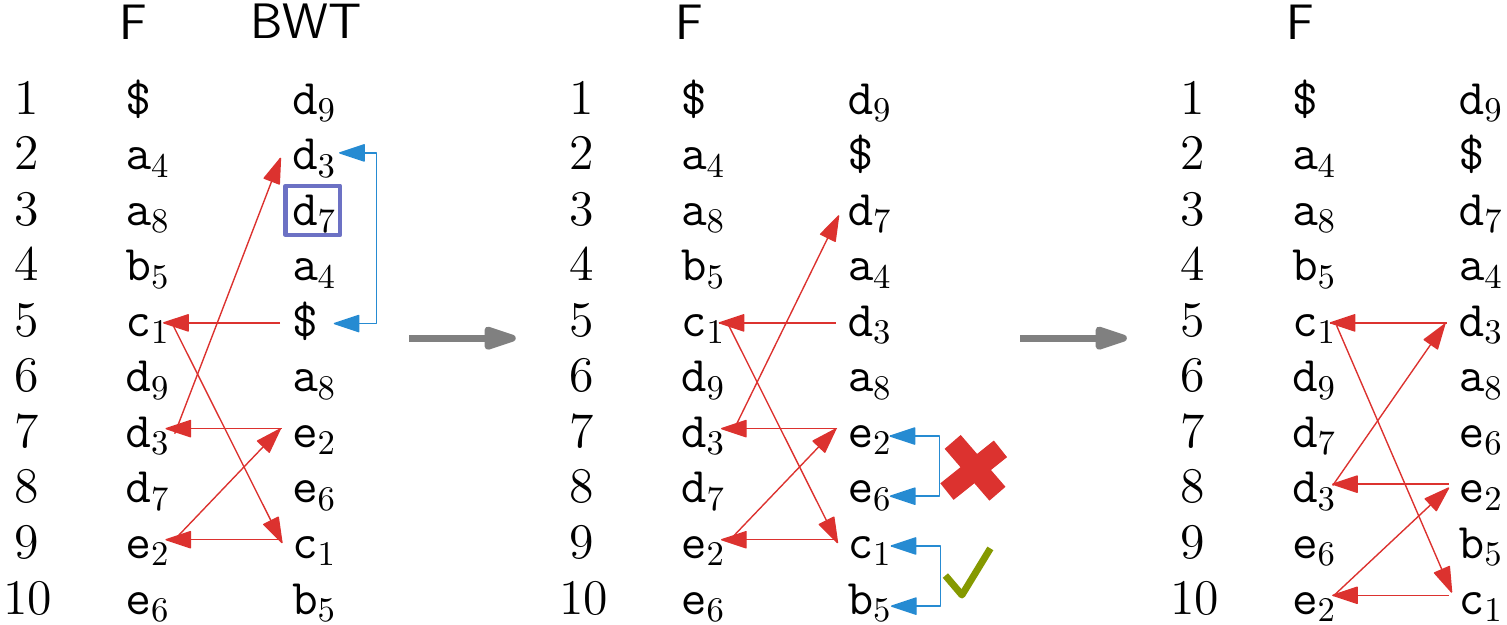}
	}\caption{Special case for computing \BWT{} from \BBWT{} (cf. \cref{secBWTtoBBWT}) with the different example string $T\texttt{\$} := \texttt{cedabedad\$}$ having $T_1 = \texttt{ced}$ as its first Lyndon factor.  
		\emph{Left column}: We find the first Lyndon factor~$T_1 = \texttt{ced}$ of $T$ by {\color{solarizedRed}forward steps} with \FL{}.
		Its last character is stored at $\BWT[2]$.
		By {\color{solarizedBlue}exchanging} \texttt{\$} with the last character of~$T_1$ in \BWT{}, we obtain the middle column.
		\emph{Middle column}: The \LF{} mapping for the third~{\color{solarizedViolet}\texttt{d}} in \First{} becomes invalid.
		However, there is only a character run of $T_1[|T_1|-1] = \texttt{e}$ in $\BWT{}$ of the $T_1[|T_1|] = \texttt{d}$ interval~$[7\twodots{}8]$ in \First{} starting with $p=7$.
		So we recurse on $\LF[p]$ to find 
		characters different from $T_1[|T_1|-2] = \texttt{c}$ to {\color{solarizedBlue}swap}
		in the respective $T_1[|T_1|-1] = \texttt{e}$ interval~$[9\twodots{}10]$. 
		\emph{Right Column}: We have created a cycle with the characters of the first Lyndon factor. 
		A forward step at \texttt{\$} gives the first character of the next Lyndon factor.
}
\label{figBWTtoBBWTSwapBeyond}
\end{figure}

Our aim is to change \BWT{} such that a forward or backward step within the characters belonging to $T_1$ always results in a cycle.
Informally, we want to cut $T_1$ out of \BWT{}, which additionally allows us to recursively continue with the \FL{} mapping to find the end of the next Lyndon factor~$T_2$.\footnote{As a matter of fact, if we now want to restore the text with the modified \BWT{} by \LF{}, we would only produce $T_2 \cdots T_t$.}
For that, we exchange $\BWT[\Pos{\$}]$ with $\BWT[\Pos{f}]$ (cf.~\cref{figBWTtoBBWTExample}).
Then the character $T[\iend{T_1}+1]$ (i.e., the first character of $T_2$) becomes the next character of $\texttt{\$}$ in terms of the forward step
($\BWT[\FL[\Pos{f}]] = T[\ibeg{T_2}]$),
while a backwards search on the first character of $T_1$ yields $T_1$'s last character (\LF{} returns \Pos{\$}, but now $\BWT[\Pos{\$}] = T_1[|T_1|] = \texttt{f}$).
This is sufficient as long as $\BWT[i] \not= \texttt{f}$ for every $i \in (\Pos{f}\twodots{}\Pos{\$}]$. 
Otherwise, it can happen that we change the mapping from the $i$-th \texttt{f} of $\First$ to the $i$-th \texttt{f} of $\BWT$ (or vice versa) unintentionally.
In such a case, we swap some entries in $\BWT$ within the \texttt{f} interval of $\First$.
In detail, we conduct the exchange ($\BWT[\Pos{\$}]$ with $\BWT[\Pos{f}]$), but continue with swapping $\BWT[i]$ and $\BWT[i+1]$ unless 
$\BWT[\FL[i]]$ becomes that \texttt{f} that corresponds to $T_1[|T_1|]$
for increasing $i$ starting with $i = p$ until $\First[i] \not= \texttt{f}$ or $\LF[i] \not\in [\Pos{f}\twodots{}\Pos{\$}]$.
This may not be sufficient if the characters we swap are identical (cf.~\cref{figBWTtoBBWTSwapBeyond}). 
In such a case, 
we recurse on the $T_1[|T_1|-1]$ interval of \First{}, see also \cref{algoBWTToBBWT} in the appendix.

Instead of checking whether we have created a cycle after each swap, we want to compute the exact number of swaps needed for this task.
For that we note that 
exchanging $\BWT[\Pos{\$}]$ with $\BWT[\Pos{f}]$ decrements the values of $\BWT.\rank_{\texttt{f}}(j)$ for every $j \in [\Pos{f}\twodots{}\Pos{\$}]$ by one.
In particular, $\BWT.\select_{\texttt{f}}$ changes for those \texttt{f}'s in \BWT{} that are between \Pos{f} and \Pos{\$}.
Hence, the number of swaps~$m$ is the number of positions~$k \in [\Pos{f}+1\twodots{}\Pos{\$}-1]$ 
with $\BWT[k] = \texttt{f}$.
The swaps are performed within the range~$\mathcal{I}$ starting with $p+1$ and covering all positions $i$ with $\LF[i] \in [\Pos{i}\twodots{}\Pos{\$}]$
and $\First[i] = \texttt{f}$ since $\mathcal{I}$ covers all entries whose mapping has changed.
However, if $\BWT[p\twodots{}]$ starts with a character run of $T[\iend{T_1}-1]$ (or of $T[\ibeg{T_1}]$ if $|T_1| = 1$)\footnote{For $|T_1| = 1$, $p = \Pos{\$}$, and hence, $\BWT[p]$ was \texttt{\$} but now is $\texttt{f} = T_1[|T_1|] = T_1[1]$.}, 
swapping the identical characters does not change \BWT{}, and therefore has no effect of changing \LF{}.
Instead, we search the end of this run within $\mathcal{I}$ to swap the first entry~$i$ below this run with the first entry of this run,
and recurse on swapping entry~$i$ with entries below of it.

\subparagraph{Correctness.}
To see why the swaps restore the LF mapping for $T_1$ and the remaining part of the text~$T_2 \cdots T_t$, 
we examine those substrings of~$T$ that we might no longer find with the LF mapping after exchanging $\BWT[\Pos{\$}]$ with $\BWT[\Pos{f}]$.

In detail, we examine each substring $S_j := x_j y_j \texttt{f} \in \Sigma^3$ with $j \in [1\twodots{}m]$
that is represented in \BWT{} (before changing it) with
$\BWT[p+j] = y_j, \BWT[\LF[p+j]] = x_j, \BWT[\FL[p+j]] = \texttt{f}$, 
and $i_j := \FL[p+j] \in [\Pos{f}+1\twodots{}\Pos{\$}-1]$.
Due to the LF-mapping, $\BWT.\select_{\texttt{f}}(\BWT.\rank_{\texttt{f}}(\Pos{f}) + j) = i_j$, 
meaning that $\BWT[i_j]$ is the $j$-th \texttt{f} in $\BBWT[\Pos{f}+1\twodots{}\Pos{\$}-1]$, which stores $m$ \texttt{f}'s.
After exchanging $\BWT[\Pos{\$}]$ with $\BWT[\Pos{f}]$, $\FL[p+j]$ becomes $i_{j+1}$ for $j \in [0\twodots{}m]$ with $i_{m+1} := \Pos{\$}$.
However, for all $i > p+m$, $\FL[i]$ did not change. 
Hence, we only have to focus on the range $\mathcal{I} = [p+1\twodots{}p+m]$.

First, suppose that $y_1 = \BWT[p+1] \not= \BWT[p]$.
If we swap $\BWT[p]$ with $\BWT[p+1]$, then $\LF[p]$ is still $i_1$, 
but $\BWT[\LF[p]]$ becomes $x_1$ such that we have fixed the substring~$x_1 y_1 \texttt{f}$. 
This also works in a more general setting:
If $y_j = \BWT[p+j] \not= \BWT[p]$ for every $j \in [1\twodots{}m]$, 
we can perform $m$ swaps like above for all $m$ entries in $\BWT[\mathcal{I}]$ to fix all substrings~$S_j$.

Now suppose that $y_j = \BWT[p+j] = \BWT[p]$ for $j \in [1\twodots{}\ell]$ with the largest possible $\ell \in [2\twodots{}m]$.
Let $k > p+\ell$ be the first entry with $\BWT[k] \not= \BWT[p]$.
First, suppose that $k \in \mathcal{I}$.
Then $\First[k] = \texttt{f}$, and swapping $\BWT[k]$ with $\BWT[p]$ restores the LF mapping for the substrings~$S_j$ with $j \in [1\twodots{}\ell]$
since this swap decrements $\BWT.\rank_{y_j}[p+j]$ by one for every $j \in [1\twodots{}\ell]$.
We recurse on swapping $\BWT[k]$ with the following \BBWT{} entries in~$\mathcal{I}$ until all $m$ substrings got restored.
Finally, if $k \ge p+m$, then all $y_i$ are equal such that we can find the $x_i$ in \BWT{} consecutively stored at positions with an \First{} value of $y_i$. Thus, we can apply the swaps there recursively.

\subparagraph{Time Complexity.}
Fixing a Lyndon factor~$T_x$, we spend \Oh{|\mathcal{I}|} time for the swaps in $\BWT[\mathcal{I}]$, 
and perform the swaps recursively at most $|T_x|$ times, where we need additionally \Oh{n} time per recursion step for computing $\LF[p]$, summing up to $\Oh{|T_x| (|\mathcal{I}| + n)} = \Oh{n|T_x|}$ time.
Since $\sum_{x=1}^t |T_x| = n$, we yield \Oh{n^2} total time.

\section{Open Problems}
Our algorithm of \cref{secInvertBBWT} converts \BBWT{} to \BWT{}, Lyndon factor by Lyndon factor.
It would be interesting to find another conversion that works character-wise.
Here, our inversion algorithm extracts a Lyndon factor in text order from \BBWT{}, while the used BWT construction algorithm parses the text in reverse text order.

{Crochemore et al.}~\citet[Sect.~4]{crochemore15bwt} proposed a space and time trade-off algorithm based on their in-place techniques computing or inverting~\BWT{}.
We are positive that it should be possible to adapt their techniques for computing or inverting~\BBWT{} or~\BWTC{} with a trade-off parameter.

From the combinatorial perspective, we question whether the number of distinct Lyndon words of~$T$ is bounded by the runs in \BBWT{}.
If we can affirm this question, it would be possible to adapt the \BBWT{} based index data structure~\cite{bannai19bbwt} for \RLBBWT{} using \Oh{\runBBWT{} \lg n} bits of space
because this solution needs a bit vector with rank and select support marking the positions in \BBWT{} corresponding to the distinct Lyndon factors. 
If this number is at most the number of runs~$\runBBWT$, 
then we can store this bit vector entropy-compressed in \Oh{r \lg n} bits when $\runBBWT = \oh{n}$ since 
$n H_0(r) = n \lg (n/(n-r)) + r \lg ((n-r)/r) \le n \lg r \Leftrightarrow r \lg ((n-r)/r) \le n \lg (r(n-r)/n)$ for $r = \runBBWT$.

Speaking of \RLBBWT{}, we wonder whether we can construct \RLBBWT{} online in run-length compressed space similar to \cref{corTextToRLBWT}. 
With the run-length compressed wavelet tree, the algorithm of {Bonomo et al.}~\citet[Thm.~17]{bonomo14sorting} works in \Oh{n \lg \runBBWT / \lg \lg \runBBWT} time with $\max_{x \in [1\twodots{}t]} |T_x| + \Oh{\runBBWT \lg n}$ bits of space by reading each Lyndon factor of the text individually.

\clearpage
\bibliographystyle{plainurl}
\bibliography{literature,references}

\clearpage
\appendix
\section{Computing the Transforms}
For readers unfamiliar with the three studied BWT variants introduced in \cref{secBWT}, 
we provide didactically simplified constructions in \cref{figBBWT,figBWT,figBWTC} that neglect the in-place restriction.

\begin{figure}[H]
	\centering{\includegraphics[width=\linewidth]{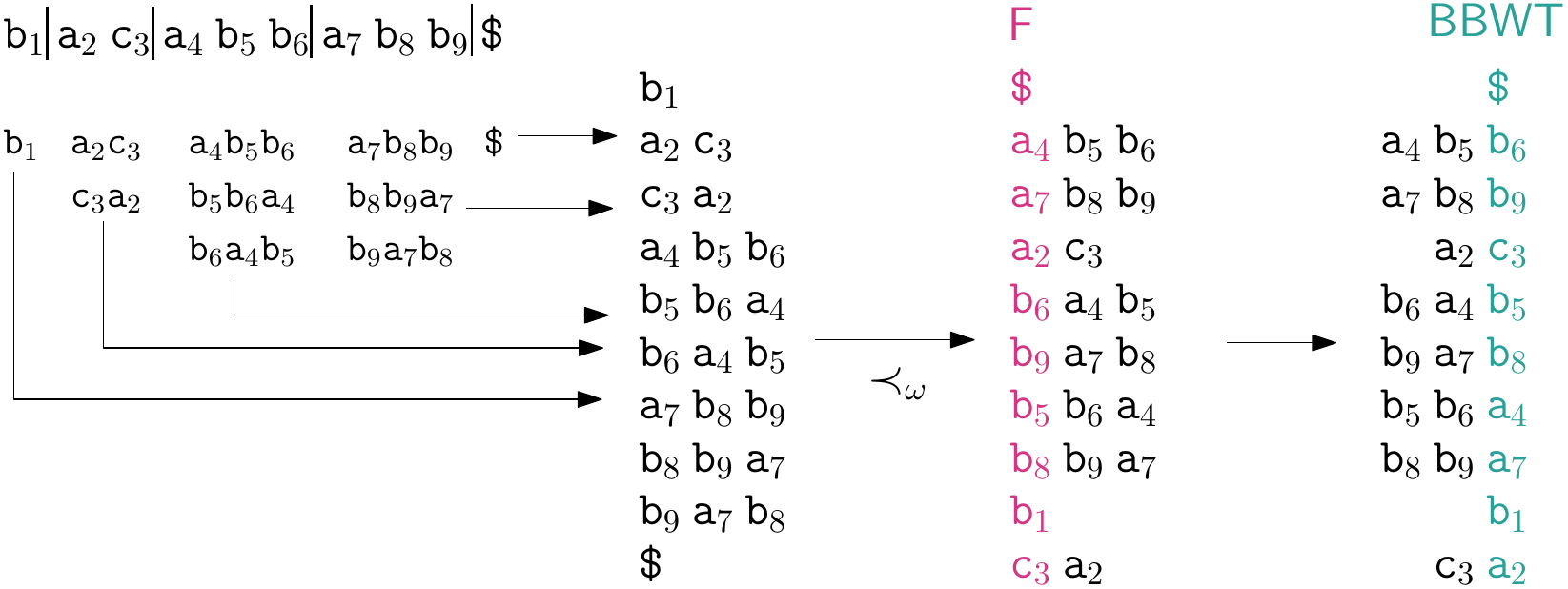}
	}\caption{Constructing \BBWT{} of $T\texttt{\$} = \texttt{\RunningExample}\texttt{\$}$.
	The Lyndon factorization of $T\texttt{\$}$ is visualized by the vertical bars.
	We take all conjugates of each Lyndon factor into a list, sort this list with respect to the $\OmegaOrder$ order.
	The first characters and the last characters in this list give \First{} and \BBWT{}, respectively.
}
\label{figBBWT}
\end{figure}

\begin{figure}[H]
	\centering{\includegraphics[width=\linewidth]{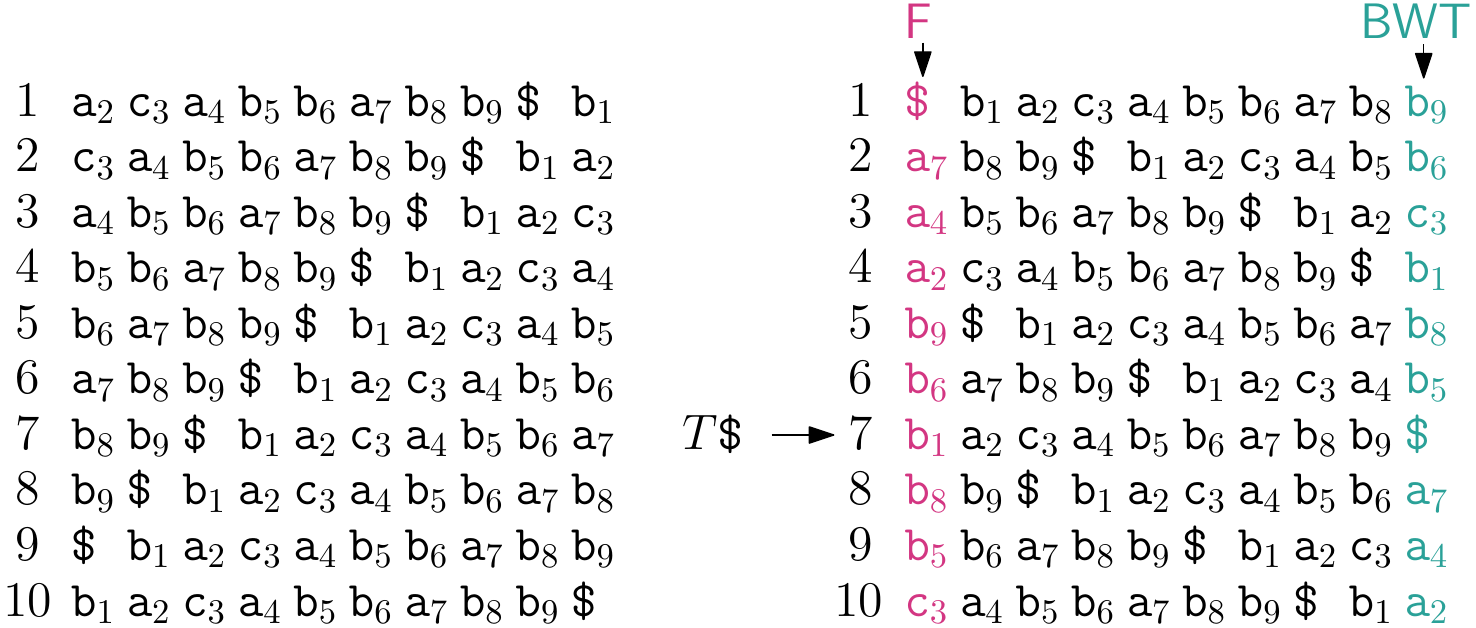}
	}\caption{Constructing \BWT{} of $\texttt{\$}T = \texttt{\$}\texttt{\RunningExample}$.
		Since $\texttt{\$}T$ is a Lyndon word, $\BWT_{\texttt{\$}T} = \BBWT_{\texttt{\$}T}$.
		To construct \BWT{}, we follow \cref{figBBWT}:
		The Lyndon factorization of $\texttt{\$}T$ consists only of~$\texttt{\$}T$ itself.
		Consequently, we take all conjugates of $\texttt{\$}T$ (\emph{left}) and sort them (\emph{right}).
		Thanks to the \texttt{\$}, it does not matter whether we sort by lexicographic order or $\OmegaOrder$ order~\cite[Lemma~7]{bonomo14sorting}.
		The first characters and the last characters in this sorted list give \First{} and \BWT{}, respectively.
	}
	\label{figBWT}
\end{figure}

\begin{figure}[H]
	\centering{\includegraphics[width=\linewidth]{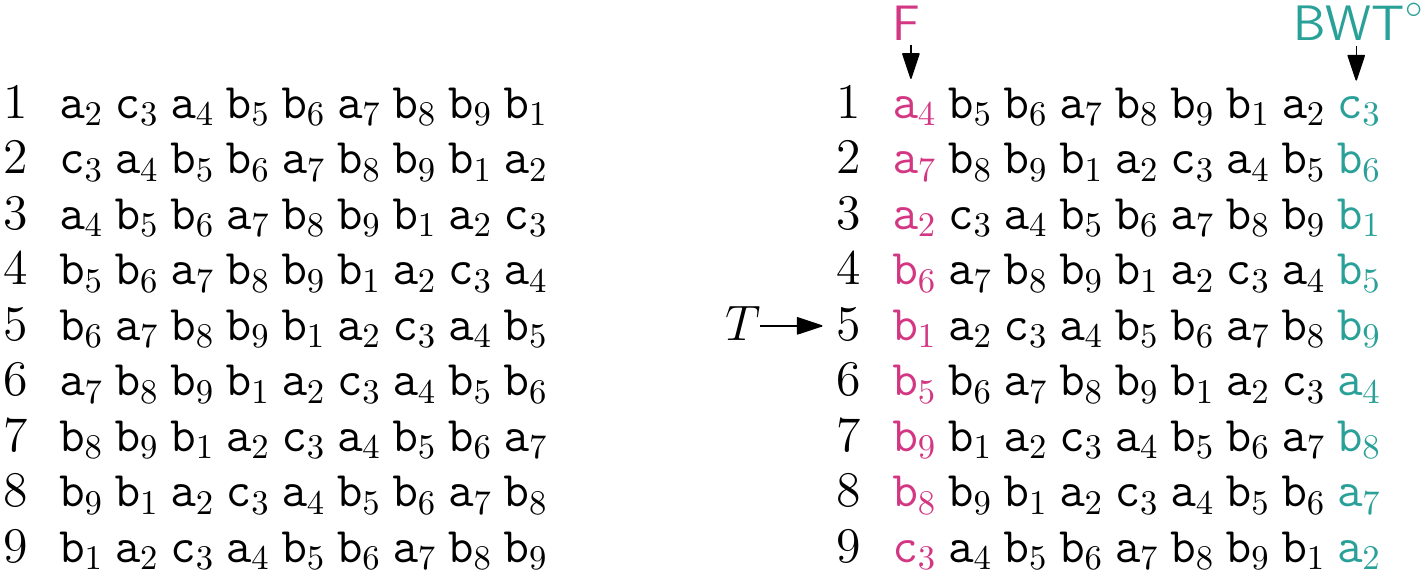}
	}\caption{Computing \BWTC{} of $T = \texttt{\RunningExample}$.
		Since $T$ is primitive, it has a Lyndon conjugate, which is $\conj[3]{T} = \texttt{abbabbbac}$.
		To construct \BWTC{}, we follow \cref{figBBWT}:
		We take all conjugates of $\conj[3]{T}$ (\emph{left}), and sort them lexicographically (\emph{right}).
		Here, the lexicographic order is equivalent to the $\OmegaOrder$ order~\cite[Lemma~7]{bonomo14sorting}.
		The first characters and the last characters in this list give \First{} and \BWTC{}, respectively.
	}
	\label{figBWTC}
\end{figure}

\section{From Run-Length Compressed Text to \RLBWT{}}
Complementary to the study of \cref{secRunLengthCompressed},
we want to examine how much space is needed to compute \RLBWT{} in $\runBWT \lg \sigma$ bits (without a wavelet tree) having the text also run-length compressed.
For that, we use the algorithm of {Crochemore et al.}~\citet[Fig.~2]{crochemore15bwt} to compute the \RLBWT{} in-place in \Oh{n^2} online on the reversed text.

\begin{lemma}
Given that \runText{} is the number of character runs in the run-length compressed text~$T$,
we can compute \RLBWT{} in \Oh{n^2} time with $\runBWT + \runText/2 + \Oh{1}$ words of working space, including the space of the run-length compressed text and \RLBWT{}.
\end{lemma}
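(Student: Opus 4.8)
The plan is to run the in-place BWT construction algorithm of Crochemore et al.~\cite[Fig.~2]{crochemore15bwt} essentially verbatim, but to track carefully how much extra space is consumed beyond the run-length compressed representations of the text and the (partially built) BWT. Recall that this algorithm scans the text from right to left, and at step~$i$ it has built the BWT of the length-$i$ suffix $T[n-i+1\twodots n]$; a new character $T[n-i]$ is prepended by locating the position of the currently marked symbol (via an \LF{}-style computation that reads the BWT in \Oh{n} time), inserting the new character there, and re-marking. Over $n$ steps this costs \Oh{n^2} time. The only thing to verify is the space claim, so the correctness and time bound come for free from \cref{secSearchSteps} and from the cited result.

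First I would set up the two run-length compressed strings as arrays of (character, length) pairs: \RLBWT{} occupies at most $\runBWT$ such pairs, i.e.\ $\runBWT$ words for the characters together with $\runBWT$ words for the run lengths, but since a character needs only $\lg\sigma \le \lg n$ bits it fits within the word budget; the reversed text occupies $\runText{}$ pairs. The subtlety — and the reason the bound is $\runText/2$ rather than $\runText$ — is that the algorithm consumes the text one character at a time from the right end, so after having processed $i$ characters, the still-unread prefix has at most $\runText{}$ runs but the already-consumed suffix has been \emph{transferred} into \RLBWT{}; the peak simultaneous storage of "text still to read" plus "BWT built so far" is governed by how the runs split. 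I would argue that a single character move from the text side to the BWT side changes the \emph{total} number of runs on both sides by at most a bounded amount, and that whenever a text run is fully consumed its slot is freed; a careful amortized counting then shows the working space never exceeds $\runBWT + \runText/2 + \Oh{1}$ words, the $\Oh{1}$ absorbing the constant number of position pointers and the marked-symbol bookkeeping. I would present this as a short invariant maintained across the $n$ steps.

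I expect the main obstacle to be precisely this $\runText/2$ accounting: one must rule out the scenario in which both the remaining text and the growing BWT simultaneously carry close to their maximum number of runs. The key observation to exploit is that removing the leftmost... rather, the rightmost character from a run on the text side either shortens that run (no change in run count, and no new BWT run need be created if it extends an existing run) or deletes a length-one run (freeing a slot); pairing each newly created BWT run against a shrinking or vanishing text run gives the factor $1/2$. Since the paper only asks for the plan, I would state this pairing as the crux, note that it is a routine amortized argument once the invariant is written down, and conclude that the algorithm computes \RLBWT{} in \Oh{n^2} time within $\runBWT + \runText/2 + \Oh{1}$ words of working space, the bound including the run-length compressed text and \RLBWT{} themselves.
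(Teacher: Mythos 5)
Your algorithmic choice and time analysis match the paper's: both run the in-place construction of Crochemore et al.\ online on the reversed text in \Oh{n^2} time, and both reduce the lemma to bounding the peak of (runs of the not-yet-consumed text) plus (runs of the partially built \BWT{}). The gap lies in your space accounting, which you yourself flag as the crux. Your pairing argument is per character: you claim that moving one character from the text side to the BWT side either creates no new BWT run (``if it extends an existing run'') or frees a text-run slot. Neither branch is justified: the inserted character is placed at the rank of its context and may land strictly inside a run of a \emph{different} character, splitting it and creating two new BWT runs, while the text run it came from merely shortens and frees nothing. Since a text run of length~$\ell$ frees only one slot but contributes $\ell$ insertions, newly created BWT runs cannot be charged injectively to vanished text runs, and nothing in your sketch prevents the intermediate \RLBWT{} from growing far beyond $\runBWT$ --- note that the bound to be proven refers to $\runBWT$, the run count of the \emph{final} BWT, whereas the BWT of a proper suffix of~$T$ is not controlled by that quantity, so ``it extends an existing run'' has no backing.

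The paper's proof rests on a per-run, not per-character, observation that your sketch is missing: for a text run $\texttt{b} = T[i] = \cdots = T[k]$, every character $T[j]$ with $j<k$ has a context beginning with \texttt{b}'s, and only $T[k]$ has an arbitrary context; from this the paper concludes that consuming an \emph{entire} text run (of any length) adds at most two character runs to the growing BWT, and it then charges the size of the partial \RLBWT{} against the number of text runs already freed to obtain the $\runText/2$ term. Some structural statement of this kind --- bounding the contribution of a whole run by a constant and tying the intermediate run count to the number of consumed text runs --- is exactly what your invariant needs and what the proposal does not supply; as written, the amortized counting you describe cannot be completed, so the stated space bound is not established.
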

\begin{proof}
Suppose $\texttt{b} := T[i] = \cdots = T[k]$ is a character run. 
Then adding this character run to \RLBWT{} induces at most two character runs in \BWT{}.
That is because (1) the context of $T[k]$ (i.e., the suffix $T[k+1\twodots{}]$ succeeding~$T[k]$) starting with $T[k+1] \cdots$ can be arbitrary, 
but (2) the context of $T[j]$ starts always with one or multiple \texttt{b}'s, for $j \in [i\twodots{}k)$. 
Hence, adding a character run of $T$ to \BWT{} can cause at most two new character runs in \BWT{}.
Consequently, it is possible that the run-length BWT computed on at least half of character runs of the text already needs as much space as $\RLBWT_T$.
In total, the memory peak is at most $\runBWT + \runText/2$ words, since we freed up at least half of character runs of $T$ when reaching this peak.
\end{proof}

\clearpage
\section{Pseudo Codes}

\begin{algorithm2e}
	\DontPrintSemicolon{}
	\caption{Duval's Algorithm, see \cref{lemLyndonFactorizationLinearTime} and \cite[Algo.~2.1]{duval83lyndon}} 
	\label{algoDuval}
	task: output the ending positions of all Lyndon factors \;
	$k \gets 0$ \Comment*{ending position of first Lyndon factor}
	\While{k < n}{$i \gets k+1$ and $j \gets k+2$ \;
		\While{$j \not= n+1$ and $T[i] \le T[j]$}{\lIf(\Comment*[f]{$T[k+1\twodots{}j]$ is a Lyndon word}){$T[i] < T[j]$}{$i \gets k+1$}
			\lIf(\Comment*[f]{$T[k+1\twodots{}i] = T[j-(i-k)\twodots{}j]$}){$T[i] = T[j]$}{$i \gets i+1$}
			$j \gets j+1$ \;
		}\Repeat(\Comment*[f]{invariant: $T[i] > T[j]$ or $j = n+1$}){$k \ge i$}{$k \gets k + (j-i)$  \;
			output $k$ 
}
	}
\end{algorithm2e}

\begin{algorithm2e}
	\DontPrintSemicolon{}
	\caption{Computing \BBWT{} from~$T$ \cite[Algo.~13]{bonomo14sorting}, cf.~\cref{secConstructBWTC}}
	\label{algoConstructBBWT}
	\ForEach{Lyndon factor $F_x$ with $x = 1$ up to $t$}{prepend $T_x[|T_x|]$ to \BBWT{} \;
		$p \gets 1$ \Comment*{insert position in \BBWT{}}
		\For{$i = |T_x| - 1$ down to $1$}{$p \gets \LF[p] + 1$ \;
			insert $T_x[i]$ right after $\BBWT[p-1]$ such that $\BBWT[p] = T_x[i]$
		}
	}
\end{algorithm2e}

\begin{algorithm2e}
	\DontPrintSemicolon{}
	\caption{Functions for computing $\LF[i] = C(\BWT, \BWT[i]) + \protect\rank(\BWT[1\twodots{}i], \BWT[i])$, cf.~\cref{eqBackwardSearch}}
	\label{algoLF}
	\SetKwProg{Function}{function}{:}{end}
	\begin{adjustbox}{valign=t}
	\begin{minipage}{0.45\linewidth}
	\RestyleAlgo{plain}
	\begin{function}[H]
	 \Function{$C(\BWT,c)$}{$\mathit{count} \leftarrow 0$; \;
		\For{$j=1$ up to $|\BWT|$}{\If{$\BWT[j] < c$}{$\mathit{count} \leftarrow \mathit{count} + 1$\;
			}
		}
		\Return $\mathit{count}$
	 }
\end{function}
	\end{minipage}
	\end{adjustbox}
	\begin{adjustbox}{valign=t}
	\begin{minipage}{0.45\linewidth}
	\RestyleAlgo{plain}
\begin{function}[H]
	 \Function{$\rank(\BWT,c)$}{$\mathit{rank} \leftarrow 0$\;
		\For{$j=1$ up to $|\BWT|$}{\If{$\BWT[j] = c$}{$\mathit{rank} \leftarrow \mathit{rank} + 1$\;
			}
		}
		\Return $\mathit{rank}$
	 }
\end{function}
	\end{minipage}
	\end{adjustbox}
\end{algorithm2e}

\begin{algorithm2e}
	\DontPrintSemicolon{}
	\caption{Computing \BWTC{} in-place for a Lyndon word~$T$, cf.~\cref{secConstructBWTC}}
	\label{algoConstructBWTC}
	let $\tilde{T}[1\twodots{}n]$ denote the working space of $n \lg \sigma$ bits storing $T[1\twodots{}n]$ \;
	swap $\tilde{T}[n]$ with $\tilde{T}[n-1]$ \Comment*{$\BWT_{T[n-1\twodots{}n]} = \tilde{T}[n-1\twodots{}n]$ since $T$ is Lyndon}
	$\mathit{lastchar} \leftarrow \tilde{T}[n]$ \Comment*{the character most recently inserted into $\BWTC$}
	$\mathit{lastpos} \leftarrow n$ \Comment*{the position of $\mathit{lastchar}$ in $\tilde{T}$}
	\For(\Comment*[f]{$\tilde{T}[1\twodots{}i-1] = T[1\twodots{}i-1]$ and $\tilde{T}[i\twodots{}n] = \BWTC[i\twodots{}n]$}){$i=n-1$ down to $2$}{$\mathit{lastpos} \gets i + C(\tilde{T}[i\twodots{}n], \mathit{lastchar}) + \rank(\tilde{T}[i\twodots{}\mathit{lastpos}], \mathit{lastchar})$
		\;
		\Comment{apply \cref{algoLF} for $\mathit{lastpos} \gets i + \LF[\mathit{lastpos}-i]$ with $\BWT[k] = \tilde{T}[k+i]~\forall k$}

$\mathit{lastchar} \leftarrow \tilde{T}[i-1] = T[i-1]$ \Comment*{save $\tilde{T}[i-1]$ before overwriting it.}
		\lFor{$j=i-1$ to $\mathit{lastpos}$}{$\tilde{T}[j] \gets \tilde{T}[j+1]$
		}
		$\tilde{T}[\mathit{lastpos}] \gets \mathit{lastchar}$
	}
	return $\tilde{T} = \BWTC$
\end{algorithm2e}

\begin{algorithm2e}
	\DontPrintSemicolon{}
	\caption{Inverting \BWTC{}, cf.~\cref{secInvertBWTC}}
	\label{algoInvertBWTC}
	$p \gets $ the position of~$T[1]$ in $\BWTC{}$ \;
	$T \gets$ empty string \;
	\While(\Comment*[f]{apply~\cite[Fig.~3]{crochemore15bwt}}){$|\BWTC| > 1$}{$p' \gets \FL[p]$ \;
		append $\BWTC[p']$ to $T$ \;
		$\BWTC[p'] \gets \texttt{\$}$ \;
		delete $\BWTC[p]$ \;
		$p \gets p'$ \;
	}
	return $T$\;
\end{algorithm2e}

\begin{algorithm2e}
	\DontPrintSemicolon{}
	\caption{Inverting \BBWT{}, cf.~\cref{secInvertBBWT}}
	\label{algoInvertBBWT}
	$T \gets$ empty string \;
	\While(\Comment*[f]{extract the currently last Lyndon factor}){\BBWT{} is not empty}{insert \texttt{\$} after $\BBWT[1]$ such that $\BBWT[2] = \texttt{\$}$ \;
		$p \gets 2$ \Comment*{$p$ is the position where \texttt{\$} is stored}

		\While(\Comment*[f]{apply~\cref{algoInvertBWTC}}){$p \not= 1$}{$p' \gets \FL[p]$ \;
			append $\BBWT[p']$ to the text buffer \;
			$\BBWT[p'] \gets \texttt{\$}$ \;
			delete $\BWT[p]$ \;
			$p \gets p'$ \;
		}
		delete $\BBWT[1]$ \Comment*{$p = 1$ and $\First[1] = \BBWT[1] = \texttt{\$}$}
	}prepend the text buffer to already restored text~$T$ \;
	return $T$\;
\end{algorithm2e}

\begin{algorithm2e}
		\newcommand*{\src}{\mathrm{src}}
		\newcommand*{\dst}{\mathrm{dst}}
		\newcommand*{\depth}{\mathrm{depth}}
		\DontPrintSemicolon{}
		\caption{Converting \BWT{} to~$\BBWT$ in-place, cf.~\cref{secBWTtoBBWT}}
		\label{algoBWTToBBWT}
		$x \gets 1$ \Comment*{counts the currently processed Lyndon factor~$T_x, x \in [1\twodots{}t]$}

		\While(\Comment*[f]{$\BWT[\FL[\Pos{\$}]] = T_x[1]$}){\textup{$\LF[\Pos{\$}] \not= \texttt{\$}$ where $\Pos{\$}$ is the position of \texttt{\$} in \BWT{}}}{$p \gets \FL^{(|T_x|-1)}[\Pos{\$}]$ \Comment*{apply \FL{} $(|T_x|-1)$ times}
			stipulate that $T_x[0] := \texttt{\$}$ \Comment*{$\BWT[p] = T_x[|T_x|-1]$ even for $|T_x|=1$}
			$\Pos{f} \gets \FL[p]$ \Comment*{$\First[p] = T_x[|T_x|], \BWT[\Pos{f}] = T_x[|T_x|]$}
		exchange $\BWT[\Pos{f}]$ with $\BWT[\Pos{\$}]$ \;
		$m \gets \left|\{ i \in [\Pos{f}+1\twodots{}\Pos{\$}-1] \text{~with~} \BWT[i] = T_x[|T_x|] \}\right|$ \;
		$\depth \gets 0$ \Comment*{recursion depth for the swaps}
		$p' \gets \LF[p]$ \Comment*{save $\LF[p]$ for the recursion}
		\While{$m > 0$}{$\dst \gets p, \src \gets p+1$ \Comment{try to swap $\BWT[\dst]$ with next unequal entry}
			\lWhile{$\BWT[\src] = T_x[|T_x|-\depth-1]$}{\label{lineIncrementSrc}
			$\src \gets \src + 1$}
			\If{$\First[\src] = T_x[|T_x|-\depth]$ and $\src \le p + m$}{swap~$\BWT[\src]$ with $\BWT[\dst]$ \;
				$\dst \gets \src$, $m \gets m - (\src - \dst)$ \;
				$p' \gets p' + (\src + \dst - 1)$ \Comment*{move $p'$ such that $\FL[p'] = p$}
				\textbf{goto} Line~\ref{lineIncrementSrc} \;
			}
			increment $\depth$ by one, $p \gets p'$, $p' \gets \LF[p']$ \;
		}
		$x \gets x+1$ \Comment*[f]{done with $T_1 \cdots T_{x-1}$}
}
invariant: $x = t$ \;
\end{algorithm2e}

\end{document}